\crefname{hypothesis}{Hypothesis}{Hypotheses}
\title{Improved randomized algorithm for $k$-submodular function maximization}
\author{Hiroki Oshima\thanks{Department of Mathematical Informatics, Graduate School of Information Science and Technology, The University of Tokyo, Tokyo, Japan
  (\email{hiroki\_oshima@me2.mist.i.u-tokyo.ac.jp}).}
}
\begin{document}

\maketitle

\begin{abstract}
Submodularity is one of the most important properties in combinatorial optimization, and $k$-submodularity is a generalization of submodularity. Maximization of  a $k$-submodular function requires an exponential number of value oracle queries, and approximation algorithms have been studied.  For unconstrained $k$-submodular maximization, Iwata et al. gave randomized $k/(2k-1)$-approximation algorithm for monotone functions, and randomized $1/2$-approximation algorithm for nonmonotone functions. 

In this paper, we present improved randomized algorithms for nonmonotone functions. Our algorithm gives $\frac{k^2+1}{2k^2+1}$-approximation for $k\geq 3$. We also give a randomized $\frac{\sqrt{17}-3}{2}$-approximation algorithm for $k=3$. We use the same framework used in Iwata et al. and Ward and \v{Z}ivn\'{y}  with different probabilities. 
\end{abstract}

\begin{keywords}
  submodular function, $k$-submodular function, randomized algorithm, approximation
\end{keywords}

\begin{AMS}
  90C27
\end{AMS}

\section{Introduction}
A set function $f:2^V \to \mathbb{R}$ is submodular if, for any $A, B \subseteq V$,
$f(A)+f(B)  \geq f(A \cup B)+f(A \cap B)$. Submodularity is one of the most important properties of combinatorial optimization.  The rank functions of matroids and cut capacity functions of networks are submodular. Submodular functions can be seen as a discrete version of convex functions \cite{frank1982algorithm, fujishige1984theory, lovasz1983submodular}.

For submodular function minimization, Gr{\"o}tschel et al. \cite{grotschel1981ellipsoid} showed the first polynomial-time algorithm. The combinatorial strongly polynomial algorithms were shown by Iwata et al. \cite{iwata2001combinatorial} and Schrijver  \cite{schrijver2000combinatorial}. On the other hand, submodular function maximization requires an exponential number of value oracle queries, and we are interested in designing approximation algorithms.  Let $f$ be an input function for maximization, $S^*$ be a maximizer of $f$, and $S$ be the output of an algorithm. The approximation ratio of the algorithm is defined as $f(S)/f(S^*)$ for deterministic algorithms and $\mathbb{E}[f(S)]/f(S^*)$ for randomized algorithms. For unconstrained submodular maximization, the randomized algorithm  \cite{buchbinder2015tight} achieves $1/2$-approximation. Feige et al. \cite{feige2011maximizing} showed $(1/2 +\epsilon )$-approximation requires an  exponential number of value oracle queries. This implies that, the algorithm is one of the best algorithms in terms of the approximation ratio. Buchbinder and Feldman \cite{buchbinder2016deterministic} showed a derandomized version of the randomized algorithm \cite{buchbinder2015tight}, and it achieves $1/2$-approximation.

$k$-submodularity is an extension of submodularity. It was first introduced by Huber and Kolmogolov \cite{huber2012towards}. A $k$-submodular function is defined as follows. Note that $[k]=\{1,2,\ldots, k\}$.
\begin{definition}
	\label{defnksubmo}{\rm (\cite{huber2012towards})}
	Let $(k+1)^{V}:=\{(X_1,...,X_k) \mid X_i\subseteq V \ (i\in [k]) ,X_i\cap X_j=\emptyset\  (i\neq j)\}$. A function $f:(k+1)^{V}\to \mathbb{R}$ is called $k$-submodular if we have  
	\[ f(\boldsymbol{x})+f(\boldsymbol{y})  \geq f(\boldsymbol{x}\sqcap \boldsymbol{y})+f(\boldsymbol{x}\sqcup \boldsymbol{y}) \]
	for any $\boldsymbol{x}=(X_1,...,X_k),\ \boldsymbol{y}=(Y_1,...,Y_k) \in (k+1)^{V}$, where
	\begin{eqnarray}
	\boldsymbol{x}\sqcap \boldsymbol{y}&=&(X_1\cap Y_1,...,X_k\cap Y_k)\ \ and \nonumber \\
	\boldsymbol{x}\sqcup \boldsymbol{y}&=&(X_1\cup Y_1\backslash \bigcup_{i\neq 1}(X_i\cup Y_i),...,X_k\cup Y_k\backslash \bigcup_{i\neq k}(X_i\cup Y_i)) \nonumber.
	\end{eqnarray}
\end{definition}
It is a submodular function if $k=1$. It is called a bisubmodular function if $k=2$. We can see applications of $k$-submodular functions in influence maximization and sensor placement \cite{ohsaka2015monotone} and computer vision \cite{gridchyn2013potts}. 

Maximization for $k$-submodular functions also requires an exponential number of value oracle queries, and approximation algorithms have been studied. Input of the problem is a nonnegative $k$-submodular function. Note that, for any $k$-submodular function $f$ and any $c \in \mathbb{R}$, a function $f'(\boldsymbol{x}):=f(\boldsymbol{x})+c$ is $k$-submodular. Output of the problem is $\boldsymbol{x}=(X_1,...,X_k) \in (k+1)^{V}$. The input function is accessed via value oracle queries. 
For bisubmodular functions, Iwata et al. \cite{iwata2013bisubmodular} and Ward and \v{Z}ivn\'{y} \cite{Ward:2016:MKS:2983296.2850419} showed that the algorithm for submodular functions \cite{buchbinder2015tight} can be extended. Ward and \v{Z}ivn\'{y} \cite{Ward:2016:MKS:2983296.2850419} analyzed an extension for $k$-submodular functions. They showed  a randomized $1/(1+a)$-approximation algorithm with $a=\max\{1,\sqrt{(k-1)/4}\}$ and a deterministic $1/3$-approximation  algorithm. 
Later  Iwata et al. \cite{iwata2016improved} showed a randomized $1/2$-approximation algorithm. For monotone $k$-submodular functions, they also gave a randomized $\frac{k}{2k-1}$-approximation algorithm. They also showed any $(\frac{k+1}{2k}+\epsilon)$-approximation algorithm requires an exponential number of value oracle queries. 

In this paper, we improve randomized algorithms for nonmonotone functions. Our algorithm gives $\frac{k^2+1}{2k^2+1}$-approximation for $k\geq 3$. We also give randomized $\frac{\sqrt{17}-3}{2}$-approximation algorithm for $k=3$. We use the same framework used in \cite{iwata2016improved} and \cite{Ward:2016:MKS:2983296.2850419} with different probabilities. 

The rest of this paper is organized as follows. In Section 2, we explain details of $k$-submodularity. We also explain previous works of unconstrained $k$-submodular maximization. In Section 3, we give $\frac{\sqrt{17}-3}{2}$-approximation algorithm for $k=3$. In Section 4, we give $\frac{k^2+1}{2k^2+1}$-approximation algorithm for $k\geq 3$. We conclude this paper in Section 5.

\section{Preliminary and previous works}
Define a partial order $\preceq$ on $(k+1)^{V}$ for $\boldsymbol{x}=(X_1,...,X_k)$ and $ \boldsymbol{y}=(Y_1,...,Y_k)$ as follows: 
\[ \boldsymbol{x} \preceq \boldsymbol{y} \overset{\mathrm{def}}{\Longleftrightarrow} X_i\subseteq Y_i\ (i \in [k]). \]
A monotone $k$-submodular function is $k$-submodular and satisfies
$f(\boldsymbol{x})\leq f(\boldsymbol{y})$ for any $\boldsymbol{x}=(X_1,...,X_k)$ and  $\boldsymbol{y}=(Y_1,...,Y_k)$ in $(k+1)^{V}$ with $\boldsymbol{x} \preceq \boldsymbol{y}$.

The property of $k$-submodularity can be written as another form. For $\boldsymbol{x}=(X_1,...,X_k) \in (k+1)^{V}$, $e\notin \bigcup_{l=1}^k X_l$, and $i \in [k]$, define
\[ \Delta_{e,i}f(\boldsymbol{x})=f(X_1,...,X_{i-1},X_i\cup\{ e \},X_{i+1},...,X_k)-f(X_1,...,X_k). \]

\begin{theorem}{\rm (\cite{Ward:2016:MKS:2983296.2850419} Theorem 7)}
	A function $f:(k+1)^{V} \to \mathbb{R}$ is $k$-submodular if and only if $f$ is orthant submodular and pairwise monotone, where $f$ is orthant submodular if
	\[ \Delta_{e,i} f(\boldsymbol{x}) \geq \Delta_{e,i}f(\boldsymbol{y})\ (\boldsymbol{x},\boldsymbol{y}\in (k+1)^{V},\ \boldsymbol{x} \preceq \boldsymbol{y},\ e \notin \bigcup_{l=1}^k Y_l, \ i \in [k]), \]
	and pairwise monotone if
	\[ \Delta_{e,i}f(\boldsymbol{x}) + \Delta_{e,j}f(\boldsymbol{x}) \geq 0\ (\boldsymbol{x} \in (k+1)^{V},\ e \notin \bigcup_{l=1}^k X_l, \ i,j \in [k] \ (i\neq j)). \]
\end{theorem}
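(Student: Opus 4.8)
The two implications have quite different flavours, and I would prove them separately.

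\emph{$k$-submodular $\Rightarrow$ orthant submodular and pairwise monotone.} Each of the two conditions is obtained by specializing the defining inequality $f(\boldsymbol{x})+f(\boldsymbol{y})\geq f(\boldsymbol{x}\sqcap\boldsymbol{y})+f(\boldsymbol{x}\sqcup\boldsymbol{y})$ to a convenient pair. Given $\boldsymbol{x}\preceq\boldsymbol{y}$ with $e\notin\bigcup_l Y_l$ and $i\in[k]$, I would apply it to $(\boldsymbol{x}^{+i},\boldsymbol{y})$, where $\boldsymbol{x}^{+i}$ is $\boldsymbol{x}$ with $e$ added to $X_i$; using $X_l\subseteq Y_l$ and $e\notin\bigcup_l Y_l$ one checks that $\boldsymbol{x}^{+i}\sqcap\boldsymbol{y}=\boldsymbol{x}$ and that $\boldsymbol{x}^{+i}\sqcup\boldsymbol{y}$ is $\boldsymbol{y}$ with $e$ added to $Y_i$, so the inequality collapses to $\Delta_{e,i}f(\boldsymbol{x})\geq\Delta_{e,i}f(\boldsymbol{y})$. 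For pairwise monotonicity, given $\boldsymbol{x}$ with $e\notin\bigcup_l X_l$ and $i\neq j$, I would apply the inequality to $(\boldsymbol{x}^{+i},\boldsymbol{x}^{+j})$: the meet is $\boldsymbol{x}$, and the join is again $\boldsymbol{x}$ because $e$ now occurs in two distinct coordinates and is therefore deleted from both, so the inequality becomes $\Delta_{e,i}f(\boldsymbol{x})+\Delta_{e,j}f(\boldsymbol{x})\geq 0$. This direction is entirely mechanical.

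\emph{Orthant submodular and pairwise monotone $\Rightarrow$ $k$-submodular.} This is the real content. Fix $\boldsymbol{x}=(X_1,\dots,X_k)$ and $\boldsymbol{y}=(Y_1,\dots,Y_k)$, and call an element $e$ a \emph{disagreement} if it does not lie in the same set of $\boldsymbol{x}$ and of $\boldsymbol{y}$ (allowing ``in no set'' as a value). Off the disagreements all four tuples $\boldsymbol{x},\boldsymbol{y},\boldsymbol{x}\sqcap\boldsymbol{y},\boldsymbol{x}\sqcup\boldsymbol{y}$ coincide, so I would induct on the number of disagreements; when there are none, $\boldsymbol{x}=\boldsymbol{y}=\boldsymbol{x}\sqcap\boldsymbol{y}=\boldsymbol{x}\sqcup\boldsymbol{y}$ and there is nothing to prove. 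For the step I would classify a disagreement $e$ as a \emph{half-assignment} ($e\in X_i$ for some $i$ but $e\notin\bigcup_l Y_l$, or symmetrically) or a \emph{conflict} ($e\in X_i$, $e\in Y_j$, $i\neq j$). If a half-assignment $e\in X_i$ exists, I replace $\boldsymbol{y}$ by $\boldsymbol{y}'$, the tuple obtained by adding $e$ to $Y_i$; then $(\boldsymbol{x},\boldsymbol{y}')$ has one fewer disagreement, $\boldsymbol{x}\sqcup\boldsymbol{y}'=\boldsymbol{x}\sqcup\boldsymbol{y}$, and $\boldsymbol{x}\sqcap\boldsymbol{y}'$ is $\boldsymbol{x}\sqcap\boldsymbol{y}$ with $e$ added to its $i$-th set. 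Feeding the inductive hypothesis for $(\boldsymbol{x},\boldsymbol{y}')$ into the target inequality and cancelling, the claim for $(\boldsymbol{x},\boldsymbol{y})$ reduces to $\Delta_{e,i}f(\boldsymbol{x}\sqcap\boldsymbol{y})\geq\Delta_{e,i}f(\boldsymbol{y})$, which is exactly orthant submodularity along $\boldsymbol{x}\sqcap\boldsymbol{y}\preceq\boldsymbol{y}$.

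The delicate case is when every disagreement is a conflict, and I expect it to be the main obstacle. Choose a conflict $e$ with $e\in X_i$, $e\in Y_j$, $i\neq j$, write $\boldsymbol{x}^{-}$ for $\boldsymbol{x}$ with $e$ deleted, and resolve the conflict in $\boldsymbol{x}$ by redirecting $e$ from $X_i$ to $X_j$, obtaining $\boldsymbol{x}^{\flat}$; now $\boldsymbol{x}^{\flat}$ and $\boldsymbol{y}$ agree at $e$, both $\boldsymbol{x}^{\flat}\sqcap\boldsymbol{y}$ and $\boldsymbol{x}^{\flat}\sqcup\boldsymbol{y}$ are the old meet and join with $e$ added to the $j$-th set, and the inductive hypothesis applies to $(\boldsymbol{x}^{\flat},\boldsymbol{y})$. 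Expanding $f(\boldsymbol{x})$ and $f(\boldsymbol{x}^{\flat})$ around $f(\boldsymbol{x}^{-})$, and using orthant submodularity along $\boldsymbol{x}\sqcap\boldsymbol{y}\preceq\boldsymbol{x}^{-}$ to clear the easy terms, the target inequality reduces to
\[ \Delta_{e,i}f(\boldsymbol{x}^{-})+\Delta_{e,j}f(\boldsymbol{x}\sqcup\boldsymbol{y})\geq 0. \]
Here $\boldsymbol{x}^{-}$ and $\boldsymbol{x}\sqcup\boldsymbol{y}$ are incomparable in $\preceq$ (the remaining conflict elements still sit in their $\boldsymbol{x}$-coordinates in $\boldsymbol{x}^{-}$ but have been zeroed out in $\boldsymbol{x}\sqcup\boldsymbol{y}$), so neither orthant submodularity nor pairwise monotonicity applies directly. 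The device I would use — and what I regard as the crux of the whole argument — is to take the coordinate-wise union $\boldsymbol{w}$ of $\boldsymbol{x}^{-}$ and $\boldsymbol{x}\sqcup\boldsymbol{y}$ and observe that it is still a legitimate element of $(k+1)^{V}$: if $g$ occupied coordinate $a$ of $\boldsymbol{x}^{-}$ and coordinate $b\neq a$ of $\boldsymbol{x}\sqcup\boldsymbol{y}$, then $g\in X_a$, hence $g\in\bigcup_{m\neq b}(X_m\cup Y_m)$, so $g$ could not survive in coordinate $b$ of the join, a contradiction; and $e\notin\bigcup_l W_l$. Since $\boldsymbol{x}^{-}\preceq\boldsymbol{w}$ and $\boldsymbol{x}\sqcup\boldsymbol{y}\preceq\boldsymbol{w}$, orthant submodularity gives $\Delta_{e,i}f(\boldsymbol{x}^{-})\geq\Delta_{e,i}f(\boldsymbol{w})$ and $\Delta_{e,j}f(\boldsymbol{x}\sqcup\boldsymbol{y})\geq\Delta_{e,j}f(\boldsymbol{w})$, and pairwise monotonicity at $\boldsymbol{w}$ gives $\Delta_{e,i}f(\boldsymbol{w})+\Delta_{e,j}f(\boldsymbol{w})\geq 0$; summing the three inequalities closes the induction. (Equivalently, one can peel off all conflicts in this way first, reducing to the conflict-free case, which is then a routine lattice-submodularity telescoping over one-element extensions using only orthant submodularity.) The fiddly part is bookkeeping how $\sqcap$ and $\sqcup$ transform under the single-element modifications and reaching the displayed inequality in clean form; the disjointness check for $\boldsymbol{w}$ is the one genuinely new idea.
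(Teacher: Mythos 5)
The paper itself does not prove this theorem; it is quoted verbatim as Theorem~7 of Ward and \v{Z}ivn\'{y} \cite{Ward:2016:MKS:2983296.2850419}, so there is no in-paper proof to compare against. Judged on its own, your argument is correct and reconstructs the standard inductive proof. The forward direction is, as you say, mechanical: specializing the defining inequality to $(\boldsymbol{x}^{+i},\boldsymbol{y})$ and $(\boldsymbol{x}^{+i},\boldsymbol{x}^{+j})$ and computing $\sqcap,\sqcup$ gives exactly orthant submodularity and pairwise monotonicity (in the pairwise-monotone case the join collapses to $\boldsymbol{x}$ because $e$ occupies two distinct coordinates and is deleted from both). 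The reverse direction via induction on disagreements is right: in the half-assignment case, adding $e$ to $Y_i$ leaves $\boldsymbol{x}\sqcup\boldsymbol{y}$ unchanged and sends $\boldsymbol{x}\sqcap\boldsymbol{y}$ to $(\boldsymbol{x}\sqcap\boldsymbol{y})^{+i}$, and the residual inequality $\Delta_{e,i}f(\boldsymbol{x}\sqcap\boldsymbol{y})\geq\Delta_{e,i}f(\boldsymbol{y})$ is orthant submodularity along $\boldsymbol{x}\sqcap\boldsymbol{y}\preceq\boldsymbol{y}$. In the conflict case, the bookkeeping works out: $\boldsymbol{x}^{\flat}\sqcap\boldsymbol{y}$ and $\boldsymbol{x}^{\flat}\sqcup\boldsymbol{y}$ are the old meet and join with $e$ inserted into coordinate $j$; expanding $f(\boldsymbol{x})-f(\boldsymbol{x}^{\flat})=\Delta_{e,i}f(\boldsymbol{x}^{-})-\Delta_{e,j}f(\boldsymbol{x}^{-})$ and using $\Delta_{e,j}f(\boldsymbol{x}\sqcap\boldsymbol{y})\geq\Delta_{e,j}f(\boldsymbol{x}^{-})$ (orthant submodularity along $\boldsymbol{x}\sqcap\boldsymbol{y}\preceq\boldsymbol{x}^{-}$, legitimate since $e\notin Y_i$ forces $e\notin X_i\cap Y_i$) does reduce matters to $\Delta_{e,i}f(\boldsymbol{x}^{-})+\Delta_{e,j}f(\boldsymbol{x}\sqcup\boldsymbol{y})\geq 0$. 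Your upper-bound construction $\boldsymbol{w}$, together with the disjointness check that any element surviving in coordinate $b$ of the join cannot also lie in $X_a$ for $a\neq b$, is exactly the right device; $e\notin\bigcup_l W_l$ holds because $e$ is deleted from every coordinate of $\boldsymbol{x}\sqcup\boldsymbol{y}$ (it sits in $X_i$ and $Y_j$ with $i\neq j$) and is deleted from $\boldsymbol{x}^{-}$ by construction. Applying orthant submodularity twice to push both deltas down to $\boldsymbol{w}$ and then pairwise monotonicity at $\boldsymbol{w}$ closes the step. This is, to my knowledge, essentially the same strategy used in the cited source, so there is nothing to flag beyond noting that the theorem is imported, not proved, in the present paper.
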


To analyze $k$-submodular functions, it is often convenient to identify $(k+1)^V$ with $\{0,1,...,k\}^V$. Let $n = |V|$. An $n$-dimensional vector $\boldsymbol{x} \in \{0,1,...,k\}^V$ is associated with $(X_1,...,X_k) \in (k+1)^V$ by $X_i = \{ e \in V\mid \boldsymbol{x}(e) = i \}$.

\subsection{Algorithm framework}\label{Sec_2.1}
In this section, we introduce the algorithm framework discussed in \cite{iwata2013bisubmodular, iwata2016improved, Ward:2016:MKS:2983296.2850419}. Iwata et al. \cite{iwata2013bisubmodular} and Ward and \v{Z}ivn\'{y} \cite{Ward:2016:MKS:2983296.2850419} used it with specific probability distributions. 

\begin{algorithm}
	\caption{(\cite{iwata2016improved}\ Algorithm 1)} \label{meta_kmax}
	\begin{algorithmic}[1]
		\item{ {\bf Input:} A nonnegative $k$-submodular function $f:\{0,1,...,k\}^{V}\to \mathbb{R}_+$.}
		\item{ {\bf Output:} A vector $\boldsymbol{s} \in \{0,1,...,k\}^{V}$.}
		\State{$\boldsymbol{s} \gets \boldsymbol{0}\ $.}
		\State{Denote the elements of $V$ by $e^{(1)},...,e^{(n)}\ (|V|=n)$.}
		\For{$t=1,...,n$}
		\State{Set a probability distribution $\{p_i\}$ over $[k]$.}
		\State{Let $\boldsymbol{s}(e^{(t)}) \in [k]$ be chosen randomly with $\mathrm{Pr}[\boldsymbol{s}(e^{(t)})=i]=p_i^{(t)}$.}
		\EndFor \\
		\Return{$\boldsymbol{s}$}
	\end{algorithmic}
\end{algorithm}

Now we define some variables to see Algorithm \ref{meta_kmax}. Let $\boldsymbol{o}$ be an optimal solution. We consider the $t$-th iteration of the algorithm, and we write $\boldsymbol{s}^{(t)}$ as the solution $\boldsymbol{s}$ after the $t$-th iteration. 
Let other variables be as follows: 
\begin{eqnarray}
	\boldsymbol{o}^{(t)}&=&(\boldsymbol{o}\sqcup\boldsymbol{s}^{(t)})\sqcup \boldsymbol{s}^{(t)}, \nonumber \\
	\boldsymbol{t}^{(t-1)}(e)&=& \begin{cases} \boldsymbol{o}^{(t)}(e) &(e\neq e^{(t)})  \\ 0 & (e= e^{(t)}) \end{cases}, \nonumber \\
	y_i^{(t)}&=&\Delta_{e^{(t)},i}f(\boldsymbol{s}^{(t-1)}), \ a_i^{(t)}=\Delta_{e^{(t)},i}f(\boldsymbol{t}^{(t-1)}). \nonumber
\end{eqnarray}

From the updating rule in the algorithm, $\boldsymbol{o}^{(t)}(e)=\boldsymbol{s}(e)$ for $e \in \{e^{(1)},...,e^{(t)}\}$, and $\boldsymbol{o}^{(t)}(e)=\boldsymbol{o}(e)$ for $e \in \{e^{(t+1)},...,e^{(n)}\}$. 
Algorithm \ref{meta_kmax} satisfies the following lemma.

\begin{lemma}\label{k_sub_c} {\rm (\cite{iwata2016improved}\ Lemma 2.1.)}\\
	Let $c\in \mathbb{R}_+$. Conditioning on $\boldsymbol{s}^{(t-1)}$, suppose that
	\begin{equation}
	\sum_{i=1}^k(a_{i^*}^{(t)}-a_{i}^{(t)})p_i^{(t)}\leq c \sum_{i=1}^k(y_{i}^{(t)}p_i^{(t)}) \label{ineq_ITYs_lemm}
	\end{equation}
	holds for each $t$ with $1\leq t \leq n$, where $i^*=\boldsymbol{o}(e^{(t)})$. Then $\mathbb{E}[f(\boldsymbol{s})]\geq \frac{1}{1+c}f(\boldsymbol{o})$.
\end{lemma}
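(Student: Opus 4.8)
The plan is to run a telescoping argument that compares, at each iteration $t$, the gain of $f$ along the algorithm's partial solution $\boldsymbol{s}^{(t)}$ against the loss of $f$ along the ``shadow'' solution $\boldsymbol{o}^{(t)}$ interpolating between the optimum $\boldsymbol{o}$ and the final output $\boldsymbol{s}$. First I would record the boundary and per-step structure of the $\boldsymbol{o}^{(t)}$: from the definition of $\sqcup$ together with the updating rule, $\boldsymbol{o}^{(0)}=\boldsymbol{o}$ (since $\boldsymbol{s}^{(0)}=\boldsymbol{0}$) and $\boldsymbol{o}^{(n)}=\boldsymbol{s}^{(n)}=\boldsymbol{s}$; moreover, using the displayed facts that $\boldsymbol{o}^{(t)}(e)=\boldsymbol{s}(e)$ for $e\in\{e^{(1)},\dots,e^{(t)}\}$ and $\boldsymbol{o}^{(t)}(e)=\boldsymbol{o}(e)$ otherwise, the vectors $\boldsymbol{o}^{(t-1)}$ and $\boldsymbol{o}^{(t)}$ differ only in the coordinate $e^{(t)}$, with $\boldsymbol{o}^{(t-1)}(e^{(t)})=\boldsymbol{o}(e^{(t)})=i^{*}$ and $\boldsymbol{o}^{(t)}(e^{(t)})=\boldsymbol{s}(e^{(t)})$, while both agree with $\boldsymbol{t}^{(t-1)}$ off $e^{(t)}$ and $\boldsymbol{t}^{(t-1)}(e^{(t)})=0$. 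I would also note that $\boldsymbol{s}^{(t-1)}$, $\boldsymbol{o}^{(t-1)}$, $\boldsymbol{t}^{(t-1)}$, and hence the numbers $y_i^{(t)}$ and $a_i^{(t)}$, are deterministic functions of the history through iteration $t-1$.

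Next, fix the history and condition on $\boldsymbol{s}^{(t-1)}$. If the algorithm assigns value $i\in[k]$ to $e^{(t)}$ at step $t$, which happens with probability $p_i^{(t)}$, then on the one hand $f(\boldsymbol{s}^{(t)})-f(\boldsymbol{s}^{(t-1)})=\Delta_{e^{(t)},i}f(\boldsymbol{s}^{(t-1)})=y_i^{(t)}$, and on the other hand, since $\boldsymbol{o}^{(t-1)}$ and $\boldsymbol{o}^{(t)}$ are obtained from $\boldsymbol{t}^{(t-1)}$ by fixing the coordinate $e^{(t)}$ to $i^{*}$ and to $i$ respectively,
\[ f(\boldsymbol{o}^{(t-1)})-f(\boldsymbol{o}^{(t)})=\big(f(\boldsymbol{o}^{(t-1)})-f(\boldsymbol{t}^{(t-1)})\big)-\big(f(\boldsymbol{o}^{(t)})-f(\boldsymbol{t}^{(t-1)})\big)=a_{i^{*}}^{(t)}-a_{i}^{(t)}. \]
Taking the conditional expectation over the step-$t$ choice and applying hypothesis \eqref{ineq_ITYs_lemm} gives
\[ \mathbb{E}\big[f(\boldsymbol{o}^{(t-1)})-f(\boldsymbol{o}^{(t)})\,\big|\,\boldsymbol{s}^{(t-1)}\big]=\sum_{i=1}^{k}(a_{i^{*}}^{(t)}-a_{i}^{(t)})p_i^{(t)}\le c\sum_{i=1}^{k}y_i^{(t)}p_i^{(t)}=c\,\mathbb{E}\big[f(\boldsymbol{s}^{(t)})-f(\boldsymbol{s}^{(t-1)})\,\big|\,\boldsymbol{s}^{(t-1)}\big]. \]

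Finally I would take expectations (tower rule) to drop the conditioning, sum over $t=1,\dots,n$ so that both sides telescope, and substitute the boundary identities together with $f(\boldsymbol{s}^{(0)})=f(\boldsymbol{0})\ge 0$ (nonnegativity of $f$):
\[ f(\boldsymbol{o})-\mathbb{E}[f(\boldsymbol{s})]=\mathbb{E}[f(\boldsymbol{o}^{(0)})]-\mathbb{E}[f(\boldsymbol{o}^{(n)})]\le c\big(\mathbb{E}[f(\boldsymbol{s}^{(n)})]-\mathbb{E}[f(\boldsymbol{s}^{(0)})]\big)\le c\,\mathbb{E}[f(\boldsymbol{s})], \]
which rearranges to $\mathbb{E}[f(\boldsymbol{s})]\ge\frac{1}{1+c}f(\boldsymbol{o})$.

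The part requiring the most care is the per-coordinate bookkeeping in the second paragraph: verifying that $\boldsymbol{o}^{(t-1)}$, $\boldsymbol{o}^{(t)}$ and $\boldsymbol{t}^{(t-1)}$ really agree off $e^{(t)}$ and take the claimed values there, since once this is in place the remainder is linearity of expectation and a telescoping sum. The only genuine edge case is $\boldsymbol{o}(e^{(t)})=0$ (so $i^{*}=0$ and $a_{i^{*}}^{(t)}$ is not covered by the defining $\Delta$-formula): there $\boldsymbol{o}^{(t-1)}=\boldsymbol{t}^{(t-1)}$, so adopting the convention $a_{0}^{(t)}:=0$ preserves the identity $f(\boldsymbol{o}^{(t-1)})-f(\boldsymbol{o}^{(t)})=a_{i^{*}}^{(t)}-a_{i}^{(t)}$ and leaves the application of \eqref{ineq_ITYs_lemm} unchanged. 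I do not foresee a deeper difficulty; in fact the $k$-submodularity of $f$ is not used in this lemma at all, which is a purely structural statement about the framework, with the submodularity entering only later when designing the $p_i^{(t)}$ that make \eqref{ineq_ITYs_lemm} hold.
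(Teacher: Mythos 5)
Your proof is correct and follows the same telescoping argument used in the cited source (Iwata et al.~\cite{iwata2016improved}); this paper itself only states the lemma as a citation and does not reprove it. The bookkeeping you flagged as delicate is indeed sound: $\boldsymbol{t}^{(t-1)}$ is fully determined by $\boldsymbol{s}^{(t-1)}$ and $\boldsymbol{o}$ (it agrees with $\boldsymbol{s}^{(t-1)}$ on $\{e^{(1)},\dots,e^{(t-1)}\}$, with $\boldsymbol{o}$ on $\{e^{(t+1)},\dots,e^{(n)}\}$, and is $0$ at $e^{(t)}$), so $y_i^{(t)}$, $a_i^{(t)}$ are measurable with respect to $\boldsymbol{s}^{(t-1)}$ as needed for the conditional-expectation step, and the per-step identities, the tower rule, the telescoping to $f(\boldsymbol{o})-\mathbb{E}[f(\boldsymbol{s})]$ and $\mathbb{E}[f(\boldsymbol{s})]-f(\boldsymbol{0})$, and the use of $f(\boldsymbol{0})\ge 0$ all check out, as does your handling of the $i^*=0$ edge case via the convention $a_0^{(t)}:=0$.
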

In the rest of this paper, we write $y_i^{(t)},\ a_i^{(t)}$ as $y_i,\ a_i$ for the simplicity if it is clear from the context. 

Our goal is to set up $\{p_i\}$ such that \eqref{ineq_ITYs_lemm} holds as large $c$ as possible. Although $\{a_i\}$ and $\{y_i\}$ are uncontrollable, we know that those satisfy \eqref{cond_OS}, \eqref{cond_PMa}, and \eqref{cond_PMy} by orthant submodurality and pairwise monotonicity. 
\begin{eqnarray}
	a_i &\leq& y_i \ (i \in [k]) \label{cond_OS}\\
	a_i+a_j&\geq& 0 \ (i,j \in [k], \ i\neq j) \label{cond_PMa}\\
	y_i+y_j&\geq& 0 \ (i,j \in [k], \ i\neq j). \label{cond_PMy} 
\end{eqnarray}

Hence in order to get $\frac{1}{1+c}$-approximation, it is sufficient to give probability distribution $\{p_i\}$ that satisfies \ref{ineq_ITYs_lemm}, for any $a_1, ..., a_k, y_1, ..., y_k$ with \eqref{cond_OS}, \eqref{cond_PMa}, and \eqref{cond_PMy}. 


\subsection{The randomized algorithm for monotone functions}
Our algorithm for nonmonotone $k$-submodular functions also uses an idea developed for maximizing monotone $k$-submodular functions in \cite{iwata2016improved}. The following is the randomized algorithm for maximizing monotone $k$-submodular functions in \cite{iwata2016improved}. 

\begin{algorithm}
	\caption{(\cite{iwata2016improved}\ Algorithm 3)}\label{kmax}
	\begin{algorithmic}[1]
		\item{ {\bf Input:} A monotone $k$-submodular function $f:\{0,1,...,k\}^{V}\to \mathbb{R}_+$.}
		\item{ {\bf Output:} A vector $\boldsymbol{s} \in \{0,1,...,k\}^{V}$.}
		\State{$\boldsymbol{s} \gets \boldsymbol{0}$, $\alpha\gets k-1$.}
		\State{Denote the elements of $V$ by $e^{(1)},...,e^{(n)}\ (|V|=n)$.}
		\For{$t=1,...,n$}
		\State{$y_i^{(t)} \gets \Delta_{e^{(t)},i}f(\boldsymbol{s})\ \ (i \in [k])$.}
		\State{$\beta \gets \sum^k_{i=1}(y_i^{(t)})^{\alpha}$.}
		\If{$\beta \neq 0$}{\ \ $p_i^{(t)} \gets (y_i^{(t)})^{\alpha}/\beta \ \ (i \in [k])$.}
		\Else 
		\State{$p_1^{(t)}=1,p_i^{(t)}=0\ (i \in \{2,...,k\})$.}
		\EndIf
		\State{Let $\boldsymbol{s}(e^{(t)}) \in [k]$ be chosen randomly with $\mathrm{Pr}[\boldsymbol{s}(e^{(t)})=i]=p_i^{(t)}$.}
		\EndFor \\
		\Return{$\boldsymbol{s}$}
	\end{algorithmic}
\end{algorithm}

Note that this algorithm follows the framework Algorithm \ref{meta_kmax}. For the analysis, the next lemma is proved in \cite{iwata2016improved}. 

\begin{lemma}\label{iwalem} {\rm (\cite{iwata2016improved},\ proof of Theorem 2.2.)}
	Let $k$ be a positive integer, $y_i (i=1,...,k)$ be nonnegative, and $i^*\in [k]$.
	\begin{equation}
	\left(1-\frac{1}{k}\right)\sum_{i=1}^k p_{i}y_{i} \geq \sum_{i\neq i^*} p_{i}y_{i^*} \label{iwathm02}
	\end{equation}
	is always satisfied by the probability distribution $\{p_i\}$ in Algorithm \ref{kmax}. 
\end{lemma}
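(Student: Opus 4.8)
The plan is to substitute the explicit probabilities used in Algorithm \ref{kmax} into \eqref{iwathm02} and reduce the claim to an elementary inequality between nonnegative reals. Write $y_i$ for $y_i^{(t)}$, recall $\alpha = k-1$, and set $\beta = \sum_{i=1}^k y_i^{k-1}$. First I would dispose of the degenerate branch: if $\beta = 0$, then since all $y_i \ge 0$ (and $k \ge 2$) every $y_i = 0$, so both sides of \eqref{iwathm02} vanish and there is nothing to prove; the case $k=1$ is likewise immediate, the right-hand sum being empty and the coefficient $1-1/k$ being zero. So assume $\beta > 0$ and $p_i = y_i^{k-1}/\beta$.

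With this choice one computes, using $\alpha + 1 = k$,
\[
\sum_{i=1}^k p_i y_i = \frac{1}{\beta}\sum_{i=1}^k y_i^{k}, \qquad \sum_{i\neq i^*} p_i y_{i^*} = \frac{y_{i^*}}{\beta}\sum_{i\neq i^*} y_i^{k-1}.
\]
Multiplying \eqref{iwathm02} through by $\beta > 0$, the claim becomes the purely algebraic statement
\[
\left(1-\frac{1}{k}\right)\sum_{i=1}^k y_i^{k} \ \ge\ y_{i^*}\sum_{i\neq i^*} y_i^{k-1}
\]
for arbitrary nonnegative reals $y_1,\dots,y_k$ and arbitrary $i^*\in[k]$.

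To establish this I would apply Young's inequality termwise: for each $i\neq i^*$, with conjugate exponents $k$ and $k/(k-1)$ applied to $y_{i^*}$ and $y_i^{k-1}$ (note $\bigl(y_i^{k-1}\bigr)^{k/(k-1)} = y_i^{k}$),
\[
y_{i^*}\, y_i^{k-1} \ \le\ \frac{y_{i^*}^{k}}{k} + \frac{k-1}{k}\, y_i^{k}.
\]
Summing over the $k-1$ indices $i\neq i^*$ gives
\[
y_{i^*}\sum_{i\neq i^*} y_i^{k-1} \ \le\ \frac{k-1}{k}\, y_{i^*}^{k} + \frac{k-1}{k}\sum_{i\neq i^*} y_i^{k} \ =\ \frac{k-1}{k}\sum_{i=1}^k y_i^{k},
\]
which is exactly what is needed. (Equivalently, one may use weighted AM--GM on $y_{i^*}^k$ with weight $1/k$ and the single term $y_i^k$ with weight $(k-1)/k$ to bound $y_{i^*}\,y_i^{k-1}$; either route is routine.) I do not expect a genuine obstacle here: the only points requiring care are verifying that the choice $\alpha = k-1$ makes the exponents in Young's inequality line up so that the constant emerges as precisely $1-1/k$, and treating the $\beta = 0$ branch separately.
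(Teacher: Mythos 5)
Your proof is correct. Note that the paper does not reprove this lemma; it is quoted verbatim from Iwata, Tanigawa, and Yoshida (2016), so there is no in-paper argument to compare against. After clearing out the $\beta=0$ and $k=1$ degeneracies, you reduce the claim to the homogeneous inequality $\frac{k-1}{k}\sum_i y_i^k \ge y_{i^*}\sum_{i\neq i^*} y_i^{k-1}$ and close it with Young's inequality (equivalently weighted AM--GM) applied termwise with exponents $k$ and $k/(k-1)$; this is the standard route, and as far as I can tell it is essentially the argument in the cited source. One small point worth making explicit: the Young/AM--GM step requires $k\ge 2$ so that the conjugate exponent $k/(k-1)$ is finite, but you have already disposed of $k=1$ separately, so the case split is airtight.
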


For monotone $k$-submodular function, $a_1,...,a_k$, $y_1,...,y_k$ also satisfy
\begin{equation}
	y_i\geq 0, a_i\geq 0 \ (i \in [k]) \label{cond_monotone}.
\end{equation}
From Lemma \ref{iwalem}, \eqref{cond_OS}, and \eqref{cond_monotone}, \eqref{ineq_ITYs_lemm} follows for $c=1-\frac{1}{k}$ and $p_1, ..., p_k$ given in Algorithm \ref{kmax}. The approximation ratio of Algorithm \ref{kmax} is $\frac{k}{2k-1}$, following from Lemma \ref{k_sub_c}, Lemma \ref{iwalem}, \eqref{cond_OS}, and \eqref{cond_monotone} \cite{iwata2016improved}.
  

\section{A randomized algorithm for \texorpdfstring{$k=3$}{k=3}}
In this section, we give an improved algorithm for $k$-submodular maximization with $k=3$. We use the framework of randomized algorithms (Algorithm \ref{meta_kmax}) with a different probability distribution.

\subsection{An improved analysis of Algorithm \ref{meta_kmax}}
As reviewed in Section \ref{Sec_2.1}, Lemma \ref{k_sub_c} plays a key role in the analysis of an approximation algorithm based on the framework Algorithm \ref{meta_kmax}. In this section, we first show that Lemma \ref{k_sub_c} can be improved. From pairwise monotonicity, we have $a_i+a_j \geq 0\ (i\neq j)$. Therefore, the number of $i$ with $a_i<0$ is at most one.


\begin{lemma}\label{k_sub_c_nonmono}
	Let $c\in \mathbb{R}_+$, and $i^*=\boldsymbol{o}(e^{(t)})$. If there is an index $i$ with $a_i<0$, we call it $i_-$.
	Suppose that the following inequalities hold for each $t$ with $1\leq t \leq n$. Then $\mathbb{E}[f(\boldsymbol{s})]\geq \frac{1}{1+c}f(\boldsymbol{o})$.
	\begin{equation}
		c \sum_{i=1}^k(y_{i} p_i) \geq
		\begin{cases} 
			0&(\text{if}\  a_{i_-} < 0 \ \text{and} \  i_- = i^*) \\
			(1-p_{i^*})a_{i^*}+(1-p_{i^*}-2p_{i_-})a_{i_-} &(\text{if}\  a_{i_-} < 0 \ \text{and} \  i_-\neq i^*)  \\
			(1-p_{i^*})a_{i^*} &(\text{if}\  a_i \geq 0 \ \text{for any} \ i) \\
		\end{cases}\label{ineq_nonmono}
	\end{equation}
\end{lemma}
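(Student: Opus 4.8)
The plan is to re-run the telescoping argument behind Lemma~\ref{k_sub_c}, but to replace its per-iteration estimate on the decay of $f(\boldsymbol{o}^{(t)})$ by the sharper, case-dependent bound appearing in \eqref{ineq_nonmono}. First I would recall, from the way $\boldsymbol{o}^{(t)}$, $\boldsymbol{o}^{(t-1)}$ and $\boldsymbol{t}^{(t-1)}$ are defined, that at step $t$ both $\boldsymbol{o}^{(t-1)}$ and $\boldsymbol{o}^{(t)}$ coincide with $\boldsymbol{t}^{(t-1)}$ away from $e^{(t)}$: the former has $e^{(t)}$ set to $i^*=\boldsymbol{o}(e^{(t)})$ and the latter has it set to the random value $\boldsymbol{s}^{(t)}(e^{(t)})$. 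Consequently, conditioning on $\boldsymbol{s}^{(t-1)}$,
\[
\mathbb{E}\big[f(\boldsymbol{o}^{(t-1)})-f(\boldsymbol{o}^{(t)})\bigm|\boldsymbol{s}^{(t-1)}\big]=\sum_{i=1}^k p_i(a_{i^*}-a_i)=(1-p_{i^*})a_{i^*}-\sum_{i\neq i^*}p_i a_i ,
\]
and $\mathbb{E}[f(\boldsymbol{s}^{(t)})-f(\boldsymbol{s}^{(t-1)})\mid\boldsymbol{s}^{(t-1)}]=\sum_i p_i y_i$, with the convention $a_{i^*}=p_{i^*}=0$ when $i^*=0$; this is exactly the bookkeeping from the proof of Lemma~\ref{k_sub_c}.

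The core of the proof is then the purely algebraic claim that the displayed conditional expectation is at most the case-dependent quantity in \eqref{ineq_nonmono}, using only pairwise monotonicity of the $a_i$ (that is, $a_i+a_j\ge0$ for $i\neq j$, so at most one $a_i$ is negative). I would split into the three branches. If all $a_i\ge0$, then $-\sum_{i\neq i^*}p_ia_i\le0$ and the bound $(1-p_{i^*})a_{i^*}$ is immediate. If $a_{i_-}<0$ and $i_-=i^*$, then $(1-p_{i^*})a_{i^*}\le0$, while for every $i\neq i^*$ pairwise monotonicity gives $a_i\ge-a_{i_-}>0$, hence $-\sum_{i\neq i^*}p_ia_i\le0$, giving the bound $0$. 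If $a_{i_-}<0$ and $i_-\neq i^*$, I would peel off the $i=i_-$ term: for every $i\neq i_-$ one has $a_i\ge-a_{i_-}$, so $\sum_{i\neq i^*,i_-}p_ia_i\ge-a_{i_-}\big(1-p_{i^*}-p_{i_-}\big)$, and therefore
\[
-\sum_{i\neq i^*}p_ia_i=-p_{i_-}a_{i_-}-\!\!\sum_{i\neq i^*,i_-}\!\!p_ia_i\ \le\ (1-p_{i^*}-2p_{i_-})a_{i_-},
\]
which yields the middle line of \eqref{ineq_nonmono}.

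Combining these, the hypothesis \eqref{ineq_nonmono} gives $\mathbb{E}[f(\boldsymbol{o}^{(t-1)})-f(\boldsymbol{o}^{(t)})\mid\boldsymbol{s}^{(t-1)}]\le c\sum_i p_iy_i=c\,\mathbb{E}[f(\boldsymbol{s}^{(t)})-f(\boldsymbol{s}^{(t-1)})\mid\boldsymbol{s}^{(t-1)}]$ for every $t$. Taking total expectations, summing over $t=1,\dots,n$ and telescoping (with $\boldsymbol{o}^{(0)}=\boldsymbol{o}$, $\boldsymbol{o}^{(n)}=\boldsymbol{s}^{(n)}=\boldsymbol{s}$, $\boldsymbol{s}^{(0)}=\boldsymbol{0}$) yields $f(\boldsymbol{o})-\mathbb{E}[f(\boldsymbol{s})]\le c\big(\mathbb{E}[f(\boldsymbol{s})]-f(\boldsymbol{0})\big)\le c\,\mathbb{E}[f(\boldsymbol{s})]$ by nonnegativity of $f$, i.e. $\mathbb{E}[f(\boldsymbol{s})]\ge\frac{1}{1+c}f(\boldsymbol{o})$.

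I expect the only genuine work to be the case $i_-\neq i^*$ of the algebraic claim, where pairwise monotonicity must be applied simultaneously to all indices outside $\{i^*,i_-\}$ while tracking $\sum_{i\neq i^*,i_-}p_i=1-p_{i^*}-p_{i_-}$; the remainder is a verbatim repetition of the telescoping in Lemma~\ref{k_sub_c}, with some care over the degenerate subcases ($i^*=0$, or $i^*$ itself the negative index) so that the three branches of \eqref{ineq_nonmono} are exhaustive.
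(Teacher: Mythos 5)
Your proposal is correct and follows essentially the same path as the paper: it reduces the lemma to the per-iteration algebraic inequality $\sum_i (a_{i^*}-a_i)p_i \leq$ (RHS of \eqref{ineq_nonmono}), handled by the same three-case argument using pairwise monotonicity ($a_i \geq -a_{i_-}$ for $i \neq i_-$) and $\sum_{i\neq i^*,i_-}p_i = 1-p_{i^*}-p_{i_-}$. The only cosmetic difference is that the paper invokes Lemma~\ref{k_sub_c} as a black box to supply the telescoping, while you re-derive that telescoping inline; the new content — the case split and the resulting bound — is identical.
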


\begin{proof}
	By Lemma \ref{k_sub_c}, it suffices to show
	\begin{eqnarray}
		&&\sum_{i=1}^k(a_{i^*}-a_{i})p_i \nonumber\\	
		&\leq&
		\begin{cases}
			0&(\text{if}\  a_{i_-} < 0 \ \text{and} \  i_- = i^*) \\
			(1-p_{i^*})a_{i^*}+(1-p_{i^*}-2p_{i_-})a_{i_-} &(\text{if}\ a_{i_-} < 0 \ \text{and} \  i_-\neq i^*) \\
			(1-p_{i^*})a_{i^*} & (\text{if}\ a_i \geq 0 \ \text{for any} \ i)
		\end{cases} \label{lem_ksubc_nonmono}.
	\end{eqnarray}
	
	From the definition, $p_1,...,p_k$ follows $p_i\geq 0$ for any $i$ and $\sum_{i=1}^k p_i=1$.
	
	Suppose $a_i \geq 0$ for any $i$. From
	\begin{eqnarray}
		\sum_{i=1}^k(a_{i^*}-a_{i})p_i \leq \sum_{i\neq i^*} a_{i^*}p_i =  (1-p_{i^*})a_{i^*}, \nonumber 
	\end{eqnarray}
	we obtain \eqref{lem_ksubc_nonmono}. 
	
	Thus suppose $a_{i_-} < 0$. If $i_- \neq i^*$, from pairwise monotonicity, we have $a_{i_-} \geq -a_i (i\neq i_-)$. Hence 
	\begin{eqnarray}
		\sum_{i=1}^k(a_{i^*}-a_{i})p_i&\leq& (a_{i^*}-a_{i_-})p_{i_-}+\sum_{i\neq i^*, i_-}(a_{i^*}+a_{i_-})p_i \nonumber \\
		&=&(1-p_{i^*})a_{i^*}+(1-p_{i^*}-2p_{i_-})a_{i_-}. \nonumber
	\end{eqnarray}
	holds. Therefore, we obtain \eqref{lem_ksubc_nonmono}. 
	
	Finally, suppose $i_- = i^*$. In this case, we have $a_{i^*}<0$ and $a_i\geq|a_{i^*}|>0 $ for $i\neq i^*$ from the pairwise monotonicity. Therefore we can see $(a_{i^*}-a_{i})p_i \leq 0$ for any $i$. 
\end{proof}

From Lemma \ref{k_sub_c_nonmono}, to show the approximation ratio, it is sufficient to give $c$, $p_1, ..., p_k$ which follows 
\begin{equation}
	c>0,\  p_1,...,p_k \geq 0,\  \sum_{i=1}^k p_i = 1 \nonumber
\end{equation}
for any $a_1,...,a_k$, $y_1,...,y_k$, $i^*, i_-$ which satisfies \eqref{cond_OS}, \eqref{cond_PMa}, \eqref{cond_PMy}, $i^*\neq i_-$, and 
\begin{equation}
	a_{i_-} < 0 \ ({\rm if\  there\  is\  negative\ } a_i). \nonumber 
\end{equation}
Note that we can only use the information of $y_1,...,y_k$ to give $c$, $p_1, ..., p_k$.

Motivated by Lemma \ref{k_sub_c_nonmono}, for given  $a_1,...,a_k$, $y_1,...,y_k$ in $\mathbb{R}$, and $i^*, i_-$ in $\{1,...,k\}$ with $i^* \neq i_-$, we define $f,g:\mathbb{R}^k \to \mathbb{R}$ by  
\begin{eqnarray}\label{defn_fg}
	\begin{cases} 
		f(\boldsymbol{p}):= 
		\begin{cases}
			0&(\text{if}\ a_{i_-} < 0 \ \text{and}\  i^* = i_-)\\
			(1-p_{i^*})a_{i^*}+(1-p_{i^*}-2p_{i_-})a_{i_-} &(\text{if}\ a_{i_-} < 0 \ \text{and}\  i^* \neq i_-)\\
			(1-p_{i^*})a_{i^*} &(\text{if}\ a_i \geq 0 \ \text{for any} \ i)
		\end{cases},\\
		g(\boldsymbol{p}):=\sum_{i=1}^k(y_{i} p_i). 
	\end{cases}
\end{eqnarray}
Then \eqref{ineq_nonmono} can be written as $f(\boldsymbol{p})\leq cg(\boldsymbol{p})$．

\subsection{The improved algorithm for the case \texorpdfstring{$k=3$}{k=3}}
In this section, we show the $\frac{\sqrt{17}-3}{2}$-approximation algorithm for the case $k=3$. In view of Lemma \ref{k_sub_c_nonmono}, our goal is to set up $p_1,p_2,p_3$ that satisfy \eqref{lem_ksubc_nonmono} with $c=\frac{\sqrt{17}-1}{4}$. We show that the following implementation of Algorithm \ref{meta_kmax} achieves this goal.

\begin{algorithm}
	\caption{A randomized $3$-submodular functions maximization algorithm }\label{Algo_k_3}
	\begin{algorithmic}[1]
		\State{$\boldsymbol{s} \gets \boldsymbol{0}\ (\boldsymbol{s} \in \{ 0,1,2,3 \}^{V})$.}
		\For{$t=1,...,n$}
		\State{$y_i\gets \Delta_{e,i} f(\boldsymbol{s})\ (i\in [k])$.}
		\State{Assume $y_1\geq y_2\geq y_3$.}
		\State{$\beta \gets \frac{y_2}{y_1},\ \gamma=\frac{y_3}{y_1}.$}
		\State{$\delta \gets \frac{1-\beta-\gamma}{2}+\frac{\beta}{1+\gamma}-\frac{\gamma}{\beta+\gamma}.$}
		\If{$\gamma \leq 0$}{\ $p_1^{(t)} \gets \frac{1}{1+\beta},\ p_2^{(t)} \gets \frac{\beta}{1+\beta},\ p_3^{(t)} \gets 0 .$}
		\ElsIf{$\delta > 0$}{\ $p_1^{(t)} \gets \frac{1+\gamma}{1+\beta+2\gamma},\ p_2^{(t)} \gets \frac{\beta+\gamma}{1+\beta+2\gamma},\ p_3^{(t)} \gets 0 .$}
		\Else
		\State{$p_1^{(t)} \gets \frac{2-\beta+\gamma}{2+\beta+3\gamma},\ p_i^{(t)} \gets \frac{\beta+\gamma}{2+\beta+3\gamma}\ (i\neq 1).$}		
		\EndIf
		\State{Let $\boldsymbol{s}(e^{(t)}) \in [3]$ be chosen randomly with $\mathrm{Pr}[\boldsymbol{s}(e^{(t)})=i]=p_i^{(t)}$.}
		\EndFor \\
		\Return{$\boldsymbol{s}$}
	\end{algorithmic}
\end{algorithm}

The probability distribution in Algorithm \ref{Algo_k_3} is motivated by the next lemma.

\begin{lemma}\label{k3_c}
	Let $y_1,y_2,y_3,a_1,a_2,a_3$ as any number satisfy \eqref{cond_OS}, \eqref{cond_PMa}, \eqref{cond_PMy} and $y_1\geq y_2\geq y_3$. 　Let $i_-$ be the index $i$ with $a_i<0$ if it exists, and let $i^*$ be any number in $\{1,2,3\}$. 
	
	Define $p_1,p_2,p_3$ by 
	\begin{eqnarray}
		\begin{cases}
			p_1=\frac{1}{1+\beta}, p_2=\frac{\beta}{1+\beta}, p_3=0 &(\text{if}\ \gamma \leq 0)\\
			p_1=\frac{1+\gamma}{1+\beta+2\gamma}, p_2=\frac{\beta+\gamma}{1+\beta+2\gamma}, p_3=0 &(\text{if}\ \gamma > 0, \delta > 0)\\
			p_1=\frac{2-\beta+\gamma}{2+\beta+3\gamma}, p_2=p_3=\frac{\beta+\gamma}{2+\beta+3\gamma} &(\text{if}\ \gamma > 0, \delta \leq 0)
		\end{cases}\label{prob_k3}
	\end{eqnarray}
	where 
	\begin{equation}
		\beta=\frac{y_2}{y_1}, \gamma=\frac{y_3}{y_1},\delta=\frac{1-\beta-\gamma}{2}+\frac{\beta}{1+\gamma}-\frac{\gamma}{\beta+\gamma}. \label{def_bcd}
	\end{equation}
	Then the inequality \eqref{ineq_nonmono} holds with $c=\frac{\sqrt{17}-1}{4}$.
\end{lemma}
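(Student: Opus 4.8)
The plan is to establish the single inequality $f(\boldsymbol p)\le c\,g(\boldsymbol p)$ — the content of \eqref{ineq_nonmono} rewritten via \eqref{defn_fg} — for the three probability regimes of \eqref{prob_k3}; Lemma~\ref{k_sub_c_nonmono} then yields the conclusion. First I would dispose of the degenerate case $y_1=0$: by $y_1\ge y_2\ge y_3$ and \eqref{cond_PMy} this forces $y_1=y_2=y_3=0$, hence $a_i\le 0$ for all $i$ by \eqref{cond_OS}, and since at most one $a_i$ is negative, a negative $a_{i_-}$ against a zero $a_j$ would violate \eqref{cond_PMa}; so every $a_i=0$ and both sides of \eqref{ineq_nonmono} vanish. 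Assuming $y_1>0$, rescale so that $y_1=1$; then $\beta\in[0,1]$ and $\gamma\in[-\beta,\beta]$ by \eqref{cond_PMy}, $g(\boldsymbol p)=p_1+\beta p_2+\gamma p_3$, and one checks $g(\boldsymbol p)>0$ in each regime, so the first branch of \eqref{ineq_nonmono} (where $f(\boldsymbol p)=0$) holds immediately.

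The second step is to eliminate the uncontrollable $a_i$. In the other two branches, $a_{i^*}\le y_{i^*}$ with $1-p_{i^*}\ge 0$ gives $(1-p_{i^*})a_{i^*}\le(1-p_{i^*})y_{i^*}$, and $y_{i^*}\ge 0$ is forced there (since $a_{i^*}\ge 0$ and $a_{i^*}\le y_{i^*}$). For the term $(1-p_{i^*}-2p_{i_-})a_{i_-}$ with $a_{i_-}<0$, write its coefficient as $1-p_{i^*}-2p_{i_-}=p_l-p_{i_-}$ where $l$ is the third index: if $p_l\ge p_{i_-}$ the term is $\le 0$ and is dropped (reducing to the third-branch bound); if $p_l<p_{i_-}$, then \eqref{cond_PMa} with \eqref{cond_OS} gives the sharp lower bound $a_{i_-}\ge-\min(y_{i^*},y_l)$, whence $(1-p_{i^*}-2p_{i_-})a_{i_-}\le(p_{i_-}-p_l)\min(y_{i^*},y_l)$. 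After this it suffices, in each regime, to verify finitely many inequalities $c\,g(\boldsymbol p)\ge L(\beta,\gamma)$ indexed by $i^*\in\{1,2,3\}$ and, where relevant, $i_-$; since \eqref{prob_k3} always satisfies $p_1\ge p_2\ge p_3$, the index $i_-$ is severely restricted (a negative $a_i$ on an index $j$ with $y_j\le 0$ is impossible by \eqref{cond_PMa}) and Case B contributes a genuinely new $L$ only when $p_{i_-}>p_{\text{third}}$.

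The third step is the regime-by-regime verification. When $\gamma\le 0$, $\boldsymbol p=(\tfrac1{1+\beta},\tfrac\beta{1+\beta},0)$, the only possible value of $i_-$ is $3$, every $L$ is at most $\tfrac\beta{1+\beta}$, and $g=\tfrac{1+\beta^2}{1+\beta}$, so the inequality is $c\ge\tfrac\beta{1+\beta^2}$, whose maximum over $\beta\in[0,1]$ is $\tfrac12\le c$. When $\gamma>0$ and $\delta>0$, substituting $\boldsymbol p=(\tfrac{1+\gamma}{D},\tfrac{\beta+\gamma}{D},0)$ with $D=1+\beta+2\gamma$, all the binding $L$'s collapse to $\tfrac{(1+\gamma)(\beta+\gamma)}{D}$, so the inequality becomes $c\bigl[(1+\gamma)+\beta(\beta+\gamma)\bigr]\ge(1+\gamma)(\beta+\gamma)$; a short one-variable optimisation (freeze $\gamma$, locate the critical $\beta$, then intersect with $\delta\ge 0$) shows the worst ratio over $\{0<\gamma\le\beta\le1,\ \delta>0\}$ is $\tfrac{\sqrt{17}-1}{4}$, approached as $(\beta,\gamma)\to(1,\tfrac{\sqrt{17}-3}{2})$, the corner where $\beta=1$ meets $\delta=0$. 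When $\gamma>0$ and $\delta\le0$, substituting $\boldsymbol p=(\tfrac{2-\beta+\gamma}{E},\tfrac{\beta+\gamma}{E},\tfrac{\beta+\gamma}{E})$ with $E=2+\beta+3\gamma$, the surviving inequalities are $c\bigl[2-\beta+\gamma+(\beta+\gamma)^2\bigr]\ge\max\{2(\beta+\gamma),\,2(1+\gamma)\beta,\,2(2-\beta+\gamma)\gamma\}$ over $\{0<\gamma\le\beta\le1,\ \delta\le0\}$; each fraction being monotone in $\beta$ on the region, its maximum sits on the boundary, and the decisive two push it to the same point $(1,\tfrac{\sqrt{17}-3}{2})$, where $c=\tfrac2{2+\gamma}$ is attained exactly. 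In both $\gamma>0$ regimes the binding value is the positive root of $2c^2+c-2=0$, namely $c=\tfrac{\sqrt{17}-1}{4}$; that is precisely where the constant comes from.

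The hard part will be the bookkeeping together with this last two-variable optimisation. One must enumerate correctly which $(i^*,i_-)$ configurations can occur in each regime (using that at most one $a_i<0$ and that \eqref{cond_OS}, \eqref{cond_PMa} pin down which index that can be), identify for each the extremal $a$-assignment (which depends on the sign of $1-p_{i^*}-2p_{i_-}$), and then — the genuinely nontrivial part — bound the resulting rational functions of $(\beta,\gamma)$ by $\tfrac{\sqrt{17}-1}{4}$ over the two triangular regions cut out by the sign of $\delta$. Since in some sub-cases the maximiser is interior (found by a one-variable derivative computation after freezing $\gamma$) while in the decisive sub-case it lies on the regime boundary $\delta=0$ at $\beta=1$, this step needs a short monotonicity analysis rather than pure symbol-pushing, and the appearance of $\sqrt{17}$ is traced back to the quadratic $2c^2+c-2$.
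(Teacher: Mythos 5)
Your proposal follows the same strategy as the paper's proof: bound $f(\boldsymbol p)$ via a tableau over $(i^*,i_-)$ using $p_1\ge p_2\ge p_3$, \eqref{cond_OS}, \eqref{cond_PMa}, compute $g(\boldsymbol p)$ in each of the three probability regimes, and reduce to showing $\max\{h_1,h_2\}\ge\frac{\sqrt{17}+1}{4}$ where $h_1=\tfrac1{\beta+\gamma}+\tfrac{\beta}{1+\gamma}$ and $h_2=\tfrac{1+\gamma}{\beta+\gamma}-\tfrac12+\tfrac{\beta+\gamma}2$; you correctly identify the extremal point $(\beta,\gamma)=(1,\tfrac{\sqrt{17}-3}{2})$ on the curve $\delta=0$ and trace the constant to the root of $2c^2+c-2=0$. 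The one step you do not actually execute is the two‑variable minimisation: the paper's Claim~\ref{GH_claim} handles it by introducing $D(\beta,\gamma)=(\beta+\gamma)(\beta+\gamma+1)(1-\gamma)-2\gamma(1+\gamma)$ (so $h_1>h_2\Leftrightarrow D>0$), sliding to the curve $D=0$ using $\partial_\gamma h_1<0$ and $\partial_\beta h_2\le 0$, and then evaluating there through the auxiliary $h_3=\tfrac{1+\gamma}{\gamma}h_1-\tfrac1\gamma h_2=\tfrac{\beta+1}{2\gamma}-\tfrac12$, which makes the $\sqrt{17}$ emerge cleanly; your ``freeze $\gamma$, find critical $\beta$'' plan should also work but requires handling the interior critical point of $h_1$ in $\beta$ and checking it against the $\delta=0$ boundary, which you leave unwritten. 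One small slip to fix: when $y_j\le 0$, pairwise monotonicity forces the unique negative index (if any) to \emph{be} $j$, not to avoid $j$ as your parenthetical states --- your actual use of it in the $\gamma\le 0$ regime ($i_-=3$) is correct, so this is only a misstatement, not an error in the argument.
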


\begin{proof}
	Note  that $p_1\geq p_2\geq p_3$. From $p_1\geq p_2\geq p_3\geq 0$, $y_1\geq y_2\geq y_3$, and \eqref{cond_PMy}, we obtain
	\begin{equation}
		g(\boldsymbol{p})\geq p_3(y_1+y_2+y_3) \geq 0 .\nonumber
	\end{equation}
	Therefore, $f(\boldsymbol{p})\leq cg(\boldsymbol{p})$ holds for $a_{i_-}<0$ and $i_- =i^*$. If $f(\boldsymbol{p}) \leq 0$, we also have $f(\boldsymbol{p})\leq cg(\boldsymbol{p})$. Now suppose $f(\boldsymbol{p}) > 0$ and $i_- =i^*$ if $a_{i_-}<0$.
	
	It is convenient to have a list of the values of $1-p_{i^*}-2p_{i_-}$ and $f(\boldsymbol{p})$, for each case of $i^*$ and $i_-$ on Table \ref{k_3_table}. For the inequalities in the column of $1-p_{i^*}-2p_{i_-}$, we used $p_1\geq p_2\geq p_3$ which follows from the definition \eqref{prob_k3}. For the inequalities in the column of $f(\boldsymbol{p})$, we used $y_i \geq a_i (i\in [k])$ and $a_i+a_j \geq 0$, which follows from the definition \eqref{cond_OS} and \eqref{cond_PMa}. 
	
	\begin{table}[ht]
		\centering
		\caption{Upper bound of $f(\boldsymbol{p})$.}
		\scalebox{0.8}[0.8]{
			\begin{tabular}{cc|c|rcl}\label{k_3_table}
				$i^*$ & $i_-$ & $1-p_{i^*}-2p_{i_-}$ & $f(\boldsymbol{p})$ \\[2pt] 
				\hline 
				$1$&$2$&$1-p_1-2p_2=p_3-p_2\leq 0$&$(p_2+p_3)a_1+(p_3-p_2)a_2$&$\leq$&$ (p_2+p_3)y_1+(p_2-p_3)y_3$\\[2pt]
				$1$&$3$&$1-p_1-2p_3=p_2-p_3\geq 0$&$(p_2+p_3)a_1$&$\leq$&$ (p_2+p_3)y_1$\\[2pt]
				$2$&$1$&$1-p_2-2p_1=p_3-p_1\leq 0$&$(p_1+p_3)a_2+(p_3-p_1)a_1$&$\leq$&$ (p_1+p_3)y_2+(p_1-p_3)y_3$\\[2pt]
				$2$&$3$&$1-p_2-2p_3=p_1-p_3\geq 0$&$(p_1+p_3)a_2$&$\leq $&$(p_1+p_3)y_2$\\[2pt]
				$3$&$1$&$1-p_3-2p_1=p_2-p_1\leq 0$&$(p_1+p_2)a_3+(p_2-p_1)a_1$&$\leq$&$2p_1 y_3$\\[2pt]
				$3$&$2$&$1-p_3-2p_2=p_1-p_2\geq 0$&$(p_1+p_2)a_3$&$\leq $&$(p_1+p_2)y_3$\\[2pt]
				$1$&None&None&$(p_2+p_3)a_1$&$\leq $&$(p_2+p_3)y_1$\\[2pt]
				$2$&None&None&$(p_1+p_3)a_2$&$\leq $&$(p_1+p_3)y_2$\\[2pt]
				$3$&None&None&$(p_1+p_2)a_3$&$\leq $&$(p_1+p_2)y_3$
			\end{tabular} 
		}
	\end{table}
	
	By using Table \ref{k_3_table}, we show the following: 
	\begin{claim}\label{claim_gf}
		\begin{equation}\label{qf_claim}
			\frac{g(\boldsymbol{p})}{f(\boldsymbol{p})}\geq 
			\begin{cases}
				2 & (y_3 \leq 0) \\
				\frac{1}{\beta+\gamma}+\frac{\beta}{1+\gamma}  & (y_3 > 0,\  \delta >0)\\
				 \frac{1+\gamma}{\beta+\gamma}-\frac{1}{2}+\frac{\beta+\gamma}{2}  & (y_3 > 0,\  \delta \leq 0)\\
			\end{cases} \nonumber 
		\end{equation}
		for $f(\boldsymbol{p}) > 0$.
	\end{claim}
	\begin{proof}
		\begin{description}
			\item[Case 1.]
				Suppose $y_3 \leq 0$. In this case, we have $a_3 \leq y_3 \leq  0$, and therefore $y_3 = 0$ or $i_-=3, \gamma < 0$. In Table \ref{k_3_table}, we have
				\begin{equation}
					0 < f(\boldsymbol{p}) \leq 
					\begin{cases}
						(p_2+p_3)y_1 & (i^* = 1)\\
						(p_1+p_3)y_2 & (i^* = 2)
					\end{cases}. \nonumber 
				\end{equation}
				If $i^* = 3$, we obtain $f(\boldsymbol{p})\leq 0$ from $y_3 \leq 0$, and it contradicts with $f(\boldsymbol{p}) > 0$.
				By putting $p_1, p_2, p_3$ of \eqref{prob_k3} for $\gamma < 0$, we obtain
				\begin{equation}
					0 < f(\boldsymbol{p})\leq \frac{\beta}{1+\beta}y_1  \nonumber 
				\end{equation}
				for any $(i^*,i_-)$. We also have 
				\begin{equation}
					g(\boldsymbol{p})= \frac{\beta y_1+y_2}{1+\beta} = \frac{2\beta}{1+\beta}y_1. \nonumber 
				\end{equation}
				Therefore, we obtain 
				\begin{equation}
					\frac{g(\boldsymbol{p})}{f(\boldsymbol{p})}\geq 2. \nonumber
				\end{equation}
			
			\item[Case 2.]
				Suppose $y_3 > 0$ and $\delta > 0$. In this case, we have $(p_2-p_3)y_3 \geq 0$, $(p_1-p_3)y_3 \geq 0$ and $2p_1 y_3\geq (p_1+p_2)y_3$ from $p_1\geq p_2\geq p_3$. We also have $(p_1+p_3)y_2+(p_1-p_3)y_3 \geq 2p_1y_3$ from $y_2\geq y_3$. In Table \ref{k_3_table}, we obtain 
				\begin{equation}
					0 < f(\boldsymbol{p})\leq 
					\begin{cases}
						(p_2+p_3)y_1+(p_2-p_3)y_3 & (i^* = 1)\\
						(p_1+p_3)y_2+(p_1-p_3)y_3 & (i^* = 2,3)
					\end{cases}. \label{Case2and3} 
				\end{equation}
				By putting $p_1, p_2, p_3$ of \eqref{prob_k3} for $\gamma > 0, \delta >0$, we obtain
				\begin{equation}
					0 < f(\boldsymbol{p})\leq \frac{(1+\gamma)(\beta+\gamma)}{1+\beta+2\gamma}y_1 \nonumber
				\end{equation}
				for any $(i^*,i_-)$. We also have
				\begin{equation}
					g(\boldsymbol{p})= \frac{(1+\gamma)y_1+(\beta+\gamma)y_2}{1+\beta+2\gamma} = \frac{1+\gamma+\beta(\beta+\gamma)}{1+\beta+2\gamma}y_1. \nonumber 
				\end{equation}
				Therefore, we obtain 
				\begin{equation}\label{fg_pq}
					\frac{g(\boldsymbol{p})}{f(\boldsymbol{p})}\geq \frac{1}{\beta+\gamma}+\frac{\beta}{1+\gamma}.
				\end{equation}
				
			\item[Case 3.]
				Suppose $y_3 > 0$ and $\delta \leq 0$. In this case, as same as Case 2, we have $(p_2-p_3)y_3 \geq 0$, $(p_1-p_3)y_3$, $2p_1y_3\geq (p_1+p_2)y_3$ and $(p_1+p_3)y_2+(p_1-p_3)y_3 \geq 2p_1y_3$. In Table \ref{k_3_table}, we obtain \eqref{Case2and3}. By putting $p_1, p_2, p_3$ of \eqref{prob_k3} for $\gamma > 0$ and $\delta <0$, we get
				\begin{equation}
				0 < f(\boldsymbol{p})\leq \frac{2(\beta+\gamma)}{2+\beta+3\gamma}y_1 \nonumber
				\end{equation}
				for any $(i^*,i_-)$. We also have
				\begin{equation}
				g(\boldsymbol{p})= \frac{(2-\beta+\gamma)y_1+(\beta+\gamma)(y_2+y_3)}{1+\beta+2\gamma} = \frac{2-\beta+\gamma+(\beta+\gamma)^2}{1+\beta+2\gamma}y_1. \nonumber 
				\end{equation}
				Therefore, we obtain 
				\begin{equation}\label{fg_pqr}
					\frac{g(\boldsymbol{p})}{f(\boldsymbol{p})}\geq \frac{1+\gamma}{\beta+\gamma}-\frac{1}{2}+\frac{\beta+\gamma}{2}.
				\end{equation}
				
		\end{description}
	\end{proof}	
	
	We remark that the parameter $\delta$ given in \eqref{def_bcd}, was set so that the following relation holds. 
	\begin{equation}\label{delta_cond}
		\delta > 0 \Leftrightarrow 	\text{RHS\ of\ \eqref{fg_pq}} > \text{RHS\ of\ \eqref{fg_pqr}}. 
	\end{equation}
	Let $h(\beta, \gamma)$ be defined as follows:
	\begin{equation}
		h(\beta, \gamma):= 
		\begin{cases}
			\frac{1}{\beta+\gamma}+\frac{\beta}{1+\gamma} & (\delta >0) \\
			\frac{1+\gamma}{\beta+\gamma}-\frac{1}{2}+\frac{\beta+\gamma}{2} & (\delta \leq 0)
		\end{cases} 
	\end{equation}
	
	By $1/c < 2$, if $h(\beta, \gamma) \geq 1/c$ for any $(\beta, \gamma)$ with $0 < \gamma \leq \beta \leq 1$, then we have $\frac{g(\boldsymbol{p})}{f(\boldsymbol{p})} \geq \frac{1}{c}$ by Claim \ref{claim_gf}, completing the proof. 
	
	To see $h(\beta, \gamma) \geq 1/c$, observe that 
	\begin{equation}
		h(\beta, \gamma) = \max\left\{\frac{1}{\beta+\gamma}+\frac{\beta}{1+\gamma},  \frac{1+\gamma}{\beta+\gamma}-\frac{1}{2}+\frac{\beta+\gamma}{2} \right\}
	\end{equation}
	because of \eqref{delta_cond}.
	
	Now we show $\max\left\{\frac{1}{\beta+\gamma}+\frac{\beta}{1+\gamma}, \frac{1+\gamma}{\beta+\gamma}-\frac{1}{2}+\frac{\beta+\gamma}{2} \right\} \geq \frac{1}{c}$. Let $h_1(\beta,\gamma)=\frac{1}{\beta+\gamma}+\frac{\beta}{1+\gamma}$ and $h_2(\beta,\gamma)=\frac{1+\gamma}{\beta+\gamma}-\frac{1}{2}+\frac{\beta+\gamma}{2}$. The proof is completed by showing the following claim.
	
	\begin{claim}\label{GH_claim}
		For $0< \gamma \leq \beta \leq 1$, $h(\beta,\gamma)=\max\{h_1(\beta,\gamma), h_2(\beta,\gamma)\} \geq \frac{\sqrt{17}+1}{4}$, which is at least $\frac{1}{c}$.
	\end{claim}
\end{proof}
	
\begin{proof}[Proof of Claim \ref{GH_claim}]
	If $\gamma=1$ we have $\beta =1$ from $\beta \geq \gamma$. We also have $H(1,1)=h_2(1,1)=\frac{3}{2}>\frac{\sqrt{17}+1}{4}$.
	
	Now suppose $\gamma<1$. First we show the solution of $h_1(\beta,\gamma)=h_2(\beta,\gamma)$. Let $D(\beta,\gamma)= (\beta+\gamma)(\beta+\gamma+1)(1-\gamma)-2\gamma(1+\gamma)$. From the definition, we have 
	\begin{eqnarray}
		h_1(\beta,\gamma)-h_2(\beta,\gamma)&=&\frac{\beta}{1+\gamma}-\frac{\gamma}{\beta+\gamma}-\frac{\beta+\gamma}{2}+\frac{1}{2} \nonumber \\
		&=& \frac{(1-\gamma)(\beta+\gamma+1)}{2(1+\gamma)}-\frac{\gamma}{\beta+\gamma}. \nonumber
	\end{eqnarray}
	Therefore we obtain 
	\begin{eqnarray}
		h_1(\beta,\gamma) = h_2(\beta,\gamma) \Leftrightarrow D(\beta,\gamma)=0 \label{Deq}\\
		h_1(\beta,\gamma) > h_2(\beta,\gamma) \Leftrightarrow D(\beta,\gamma)>0 \label{Dineq}
	\end{eqnarray}
	for $0<\gamma <1$ and $\beta>0$. From the definition, we also have 
	\begin{eqnarray}
		\frac{\partial D(\beta,\gamma)}{\partial \beta} = (1-\gamma)(2\beta+2\gamma+1) > 0, \label{Dbeta}\\
		\frac{\partial D(\beta,\gamma)}{\partial \gamma} = -3\gamma^2 -4\gamma(1+\beta)-\beta^2+\beta-1<0. \label{Dgamma}
	\end{eqnarray}
	Focusing on $\beta$, from $D(\gamma,\gamma)=-4\gamma^3<0$ and \eqref{Dbeta}, there is exactly one $\beta$ which satisfies $D(\beta,\gamma)=0$ and $\beta>0$ for any $\gamma$ with $0<\gamma <1$. Focusing on $\gamma$, from $D(\beta,0)=\beta(1+\beta)>0$, $D(\beta,\beta)=-4\beta^3<0$ and \eqref{Dgamma}, there is exactly one $\gamma$ satisfies $D(\beta,\gamma)=0$ and $0<\gamma<\beta$ for any $\beta$ with $\beta>0$.

	
	Let $\hat{\beta}(\gamma)= -\gamma -\frac{1}{2} +\sqrt{2(1-\gamma)-\frac{23}{4}+ \frac{4}{1-\gamma}}$ for $0<\gamma < 1$. By putting $\hat{\beta}(\gamma)$, we obtain $D(\hat{\beta}(\gamma),\gamma)=0 $ and $\hat{\beta}(\gamma)>\gamma$. Hence, for $0< \gamma <1$, $h_1(\beta,\gamma)=h_2(\beta,\gamma)$ is satisfied if and only if $\beta = \hat{\beta}(\gamma)$.
	
	Second, we show that, for any $(\beta, \gamma)$ with $0 < \gamma\leq \beta \leq 1$ and $\gamma<1$, there is $(\hat{\beta}(\gamma_0),\gamma_0)$ which satisfy $0 < \gamma\leq \beta \leq 1$ and $\gamma<1$ and $H(\beta,\gamma) \geq H(\hat{\beta}(\gamma_0),\gamma_0)$. To prove, we divide the feasible region  into three parts, the area with $D(\beta,\gamma) \geq 0$, the area with $D(\beta,\gamma) < 0$ and $\hat{\beta}(\gamma)\leq 1$, and the area with $D(\beta,\gamma) < 0$ and $1 < \hat{\beta}(\gamma)$.
	
	\begin{description}
		\item[Case 1.]
			Suppose $D(\beta,\gamma) \geq 0$. We have $h_1(\beta,\gamma) \geq h_2(\beta,\gamma)$ and $H(\beta,\gamma) = h_1(\beta,\gamma)$ from \eqref{Dineq}. Focusing on $h_1$, we obtain 
			\begin{equation}
				\frac{\partial h_1}{\partial \gamma}=-\frac{1}{(\beta+\gamma)^2}-\frac{\beta}{(1+\gamma)^2} < 0 \nonumber
			\end{equation} 
			for $0< \gamma \leq \beta \leq 1$. In this area, from \eqref{Dgamma}, there is $\gamma_0$ which satisfies $\beta=\hat{\beta}(\gamma_0)$ and $0<\gamma \leq \gamma_0 < 1$ , for any $(\beta,\gamma)$. Hence we have $h_1(\beta,\gamma_0) \leq h_1(\beta,\gamma)$ for any $(\beta,\gamma)$ which satisfy $\beta=\hat{\beta}(\gamma_0)$ and $\gamma \leq \gamma_0$.
	
		\item[Case 2.]
			Suppose $D(\beta,\gamma) < 0$ and $\hat{\beta}(\gamma)\leq 1$. We have $h_1(\beta,\gamma) \leq h_2(\beta,\gamma)$ and $H(\beta,\gamma) = h_2(\beta,\gamma)$ from \eqref{Dineq}. Focusing on $h_2$, we obtain 
			\begin{equation}
				\frac{\partial h_2}{\partial \beta}=-\frac{1+\gamma}{(\beta+\gamma)^2}+\frac{1}{\beta+\gamma}+\frac{1}{2}. \nonumber
			\end{equation} 
			This indicates $\frac{\partial h_2}{\partial \beta} \leq 0$ for $0< \gamma \leq \beta \leq 1$ and $\frac{\partial h_2}{\partial \beta} = 0$ holds only if $\beta = \gamma = 1$. In this area, from \eqref{Dbeta} and the definition, we also have $\beta \leq \hat{\beta}(\gamma) < 1$. Therefore, we obtain $h_2(\hat{\beta}(\gamma),\gamma) \leq h_2(\beta,\gamma)$ for any $(\beta,\gamma)$ in this area. 
			
		\item[Case 3.]
			Suppose $D(\beta,\gamma) < 0$ and $\hat{\beta}(\gamma) > 1$. We have $h_1(\beta,\gamma) \leq h_2(\beta,\gamma)$ and $H(\beta,\gamma) = h_2(\beta,\gamma)$ from \eqref{Dineq}. Focusing on $h_2$, as same as Case 2, we obtain $\frac{\partial h_2}{\partial \beta} \leq 0$. In this area, we also have $\beta \leq 1 < \hat{\beta}(\gamma)$ from \eqref{Dbeta} and the definition. Therefore, we obtain $h_2(1,\gamma) \leq h_2(\beta,\gamma)$ for any $(\beta,\gamma)$ in this area. 
			Let $\gamma_1$ be the solution of $\hat{\beta}(\gamma_1)=1$. From the definition, we have $D(1,\gamma)< D(1,\gamma_1)=0$. Hence we obtain $\gamma >\gamma_1$ from \eqref{Dgamma}. Therefore we obtain $h_2(1,\gamma_1)<h_2(1,\gamma)\leq h_2(\beta,\gamma)$ for any $(\beta,\gamma)$ in this area. 
	\end{description}
		
	From the consideration above, for completing the proof, it is sufficient to show $H(\beta,\gamma) \geq \frac{\sqrt{17}+1}{4}$ with $\beta=\hat{\beta}(\gamma)$. From $\gamma \neq 0$, let 
	\begin{equation}
		h_3(\beta,\gamma)  = \frac{1+\gamma}{\gamma}h_1(\beta,\gamma)-\frac{1}{\gamma}h_2(\beta,\gamma) = \frac{\beta+1}{2\gamma}-\frac{1}{2}. \nonumber 
	\end{equation} 
	From the definition, we have $h_1 = h_2 =h_3$ with $\beta = \hat{\beta}(\gamma)$. Therefore, at last, we show $h_3(\beta,\gamma) \geq \frac{\sqrt{17}+1}{4}$ with $\beta = \hat{\beta}(\gamma)$. 
	
	From the definition of $\hat{\beta}$, we obtain 
	\begin{eqnarray}
		h_3( \hat{\beta},\gamma) - \frac{\sqrt{17}+1}{4} &=& \frac{1}{2\gamma}\left(\hat{\beta}+1-\frac{\sqrt{17}+3}{2}\gamma\right) \nonumber\\
		&=& \frac{1}{2\gamma}\left(\sqrt{2(1-\gamma)-\frac{23}{4}+ \frac{4}{1-\gamma}} +\frac{1}{2} -\frac{\sqrt{17}+5}{2}\gamma  \right).\nonumber
	\end{eqnarray}
	If $\frac{1}{2} -\frac{\sqrt{17}+5}{2}\gamma \geq 0$, it is obvious that $h_3(\hat{\beta},\gamma) - \frac{\sqrt{17}+1}{4} \geq 0$. 
	
	Now suppose $\frac{1}{2} -\frac{\sqrt{17}+5}{2}\gamma <0$. In this case, we have
	\begin{eqnarray}
		&&h_3( \hat{\beta},\gamma) - \frac{\sqrt{17}+1}{4}\geq 0 \nonumber\\
		&\Leftrightarrow& 2(1-\gamma)-\frac{23}{4}+ \frac{4}{1-\gamma}- \left(\frac{1}{2} -\frac{\sqrt{17}+5}{2}\gamma\right)^2\geq 0.\nonumber
	\end{eqnarray}
	
	Let $5+\sqrt{17} = K$. Then we obtain 
	\begin{eqnarray}
		&&2(1-\gamma)-\frac{23}{4}+ \frac{4}{1-\gamma}- \left(\frac{1-K\gamma}{2}\right)^2 \nonumber\\
		&=& \frac{1}{1-\gamma} \left\{ 2(1-\gamma)^2-\frac{23(1-\gamma)}{4}+4-\left( \frac{1-K\gamma}{2} \right)^2(1-\gamma) \right\} \nonumber\\		
		&=&\frac{\gamma}{1-\gamma}\left\{\frac{K^2}{4}\gamma^2 +\frac{-K(K+2)+8}{4}\gamma+\frac{2K+8}{4} \right\} \nonumber\\
		&=&\frac{\gamma}{1-\gamma}\left(\frac{5+\sqrt{17}}{2}\gamma-\frac{1+\sqrt{17}}{2}\right)^2 \nonumber\\
		&=&\left(\frac{5+\sqrt{17}}{2}\right)^2\frac{\gamma}{1-\gamma}\left(\gamma-\frac{\sqrt{17}-3}{2}\right)^2 \geq 0. \label{ineq_for_min}
	\end{eqnarray}
	For the inequalities above, we used $0< \gamma <1$ and $\frac{1+\sqrt{17}}{5+\sqrt{17}}= \frac{\sqrt{17}-3}{2}$. From \eqref{ineq_for_min}, we obtain $h_3( \hat{\beta},\gamma) - \frac{\sqrt{17}+1}{4}\geq 0$ and complete the proof. 
\end{proof}

From Lemma \ref{k_sub_c_nonmono} and Lemma \ref{k3_c}, we obtain the following theorem.
\begin{theorem}
	Let $\boldsymbol{s}$ be the output of Algorithm \ref{Algo_k_3}, and let $\boldsymbol{o}$ be the maximizer of $3$-submodular function $f$. Then $\mathbb{E}[f(\boldsymbol{s})]\geq \frac{\sqrt{17}-3}{2}f(\boldsymbol{o})$.
\end{theorem}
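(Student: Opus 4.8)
The plan is to derive the theorem as an essentially immediate corollary of the two main technical ingredients already established: Lemma~\ref{k_sub_c_nonmono} and Lemma~\ref{k3_c}. The strategy is to check that Algorithm~\ref{Algo_k_3} is a bona fide instance of the framework Algorithm~\ref{meta_kmax}, so that the conclusion $\mathbb{E}[f(\boldsymbol{s})] \geq \frac{1}{1+c} f(\boldsymbol{o})$ applies with $c = \frac{\sqrt{17}-1}{4}$, and then to simplify $\frac{1}{1+c}$ to the stated constant $\frac{\sqrt{17}-3}{2}$.

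First I would observe that in each iteration $t$, Algorithm~\ref{Algo_k_3} computes $y_i = \Delta_{e^{(t)},i} f(\boldsymbol{s})$, relabels so that $y_1 \geq y_2 \geq y_3$, and sets $p_1^{(t)}, p_2^{(t)}, p_3^{(t)}$ exactly by the formula \eqref{prob_k3} with $\beta, \gamma, \delta$ as in \eqref{def_bcd}; one checks these are nonnegative and sum to $1$, so this is a valid probability distribution over $[3]$ and hence a legitimate specialization of Algorithm~\ref{meta_kmax}. Next, fix an iteration $t$ and condition on $\boldsymbol{s}^{(t-1)}$. Set $i^* = \boldsymbol{o}(e^{(t)})$, and if some $a_i < 0$ call that index $i_-$ (pairwise monotonicity \eqref{cond_PMa} guarantees there is at most one such index). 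By the discussion following Lemma~\ref{k_sub_c_nonmono}, it suffices to verify inequality \eqref{ineq_nonmono}, equivalently $f(\boldsymbol{p}) \leq c\, g(\boldsymbol{p})$ for the functions in \eqref{defn_fg}, for all $a_1,a_2,a_3,y_1,y_2,y_3$ satisfying \eqref{cond_OS}, \eqref{cond_PMa}, \eqref{cond_PMy} together with the ordering $y_1 \geq y_2 \geq y_3$ and the convention $i^* \neq i_-$ in the relevant case. This is precisely the content of Lemma~\ref{k3_c}, which asserts that \eqref{ineq_nonmono} holds with $c = \frac{\sqrt{17}-1}{4}$ for the distribution \eqref{prob_k3}. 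Since this holds in every iteration $t$, the hypothesis of Lemma~\ref{k_sub_c_nonmono} is met, giving $\mathbb{E}[f(\boldsymbol{s})] \geq \frac{1}{1+c} f(\boldsymbol{o})$.

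Finally I would carry out the arithmetic: with $c = \frac{\sqrt{17}-1}{4}$ we have $1 + c = \frac{\sqrt{17}+3}{4}$, so
\begin{equation}
\frac{1}{1+c} = \frac{4}{\sqrt{17}+3} = \frac{4(\sqrt{17}-3)}{(\sqrt{17}+3)(\sqrt{17}-3)} = \frac{4(\sqrt{17}-3)}{17-9} = \frac{\sqrt{17}-3}{2}, \nonumber
\end{equation}
which yields $\mathbb{E}[f(\boldsymbol{s})] \geq \frac{\sqrt{17}-3}{2} f(\boldsymbol{o})$, as claimed.

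Since both Lemma~\ref{k_sub_c_nonmono} and Lemma~\ref{k3_c} have already been proved in the excerpt, there is no genuine obstacle remaining in the theorem itself; the only points requiring a line of care are the bookkeeping that Algorithm~\ref{Algo_k_3} really does instantiate the framework with the distribution of Lemma~\ref{k3_c} (in particular that the relabeling $y_1 \geq y_2 \geq y_3$ is harmless because the algorithm's output only depends on the multiset of coordinate choices), and that the case $g(\boldsymbol{p}) \ge 0$ handled inside the proof of Lemma~\ref{k3_c} covers the situation $i_- = i^*$ so that Lemma~\ref{k_sub_c_nonmono} applies without extra hypotheses. The heavy lifting — the case analysis over $(i^*, i_-)$ via Table~\ref{k_3_table} and the optimization of $h(\beta,\gamma)$ over $0 < \gamma \le \beta \le 1$ culminating in \eqref{ineq_for_min} — is already done, so the theorem is just the packaging of these results.
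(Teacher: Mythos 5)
Your proof is correct and follows exactly the paper's route: the paper presents this theorem as an immediate consequence of Lemma~\ref{k_sub_c_nonmono} and Lemma~\ref{k3_c} with no further argument, and you supply precisely the straightforward bookkeeping (Algorithm~\ref{Algo_k_3} instantiates the framework with the distribution of Lemma~\ref{k3_c}, plus the rationalization $\frac{1}{1+c} = \frac{\sqrt{17}-3}{2}$) that the paper leaves implicit.
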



In fact, a probability distribution given in Algorithm \ref{Algo_k_3} is best possible in this analysis.
\begin{lemma}\label{k3_limit}
	Let $k=3$, $c'=\frac{\sqrt{17}-1}{4}$, and $\alpha = \frac{\sqrt{17}-3}{2}$. Suppose $a_1,...,a_k$, $y_1,...,y_k$, $i^*, i_-$ are given as
	\begin{enumerate}
		\item $a_1 = 1, a_2 = -\alpha, a_3=\alpha, y_1 = y_2 = 1 , y_3 = \alpha, i^* = 1, i_- = 2$, 
		\item $a_1 = -\alpha, a_2 = 1, a_3=\alpha, y_1 = y_2 = 1 , y_3 = \alpha, i^* = 2, i_- = 1$.
	\end{enumerate}
	There is no $p_1, p_2, p_3$ which satisfy $f(\boldsymbol{p})\leq cg(\boldsymbol{p})$ with $c < c'$ for both definitions of $a_1, ..., a_k$, $y_1,...,y_k$, $i^*, i_-$ above. 
\end{lemma}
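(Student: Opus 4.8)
The plan is to collapse the two‑instance feasibility question into a single algebraic identity. First I would substitute the given data into the definitions in \eqref{defn_fg}. In both instances we have $a_{i_-}<0$ and $i_-\neq i^*$, so the active branch of $f$ is $(1-p_{i^*})a_{i^*}+(1-p_{i^*}-2p_{i_-})a_{i_-}$. Writing $f_1,f_2$ for the two resulting functions and recalling that $\boldsymbol{p}$ ranges over probability distributions, one obtains
\[
f_1(\boldsymbol{p})=(1-p_1)-\alpha(1-p_1-2p_2),\qquad f_2(\boldsymbol{p})=(1-p_2)-\alpha(1-p_2-2p_1),
\]
while $g(\boldsymbol{p})=p_1+p_2+\alpha p_3=1-(1-\alpha)p_3$ is the same in both instances.

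The key point is that $f_1+f_2$ is a constant multiple of $g$. Expanding and using $p_1+p_2=1-p_3$ gives
\[
f_1(\boldsymbol{p})+f_2(\boldsymbol{p})=(1+\alpha)+(1-3\alpha)p_3 .
\]
I would then verify, by pure arithmetic with $\alpha=\frac{\sqrt{17}-3}{2}$ and $c'=\frac{\sqrt{17}-1}{4}$, the two scalar identities $1+\alpha=2c'$ and $1-3\alpha=-2c'(1-\alpha)$; the latter, after substituting $2c'=1+\alpha$, is nothing but the defining quadratic $\alpha^2+3\alpha-2=0$ of $\alpha$. Combining them yields
\[
f_1(\boldsymbol{p})+f_2(\boldsymbol{p})=2c'\bigl(1-(1-\alpha)p_3\bigr)=2c'\,g(\boldsymbol{p})
\]
for every probability distribution $\boldsymbol{p}$.

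Finally I would argue by contradiction. Suppose some probability distribution $\boldsymbol{p}$ and some $c<c'$ satisfy both $f_1(\boldsymbol{p})\le c\,g(\boldsymbol{p})$ and $f_2(\boldsymbol{p})\le c\,g(\boldsymbol{p})$. Adding these and using the identity above gives $2c'\,g(\boldsymbol{p})\le 2c\,g(\boldsymbol{p})$, hence $(c'-c)\,g(\boldsymbol{p})\le 0$; since $c'-c>0$ this forces $g(\boldsymbol{p})\le 0$. But $g(\boldsymbol{p})=1-(1-\alpha)p_3\ge\alpha>0$, because $0<\alpha<1$ and $0\le p_3\le 1$ — a contradiction. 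Hence no such $p_1,p_2,p_3$ exist.

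The only genuine computation is the verification that $f_1+f_2=2c'g$, which reduces to checking $1+\alpha=2c'$ and $\alpha^2+3\alpha-2=0$; I do not expect a real obstacle here, since this is a tightness lemma rather than a new construction. The one place demanding care is bookkeeping in \eqref{defn_fg}: selecting the correct branch and keeping straight which index plays the role of $i^*$ and which of $i_-$ when substituting each of the two instances, and observing that it is precisely the $p_3$‑term that renders the two sides proportional once $\sum_i p_i=1$ is imposed.
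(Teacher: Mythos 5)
Your proposal is correct and follows essentially the same approach as the paper: sum the two constraints, verify via the defining quadratic $\alpha^2+3\alpha-2=0$ that $f_1+f_2=2c'\,g$ identically on the probability simplex, and derive a contradiction from $g>0$. The only cosmetic difference is that you exploit $g_1=g_2=g$ up front and make the strict positivity $g\ge\alpha>0$ explicit, which cleans up the final step slightly relative to the paper's presentation.
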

In both definitions, $a_1, ..., a_k, y_1,...,y_k, i^*, i_-$ satisfies \eqref{defn_fg}. 

\begin{proof}
	Let $f_1, g_1$ be $f, g$ defined for the first case, and $f_2, g_2$ be $f, g$ defined for the second case. 
	Suppose $p_1, p_2, p_3$ satisfy $f_1(\boldsymbol{p})\leq cg_1(\boldsymbol{p})$ and $f_2(\boldsymbol{p})\leq cg_2(\boldsymbol{p})$.
	Then we obtain  $f_1(\boldsymbol{p})+f_2(\boldsymbol{p})\leq c\{g_1(\boldsymbol{p})+g_2(\boldsymbol{p})\}$.

	From the definition of $f$, we have 
	\begin{eqnarray}
		&&f_1(\boldsymbol{p})+f_2(\boldsymbol{p}) \nonumber \\
		&=&(1-p_{i^*})a_{i^*}^{1}+(1-p_{i^*}-2p_{i_-})a_{i_-}^{1}+(1-p_{i^*})a_{i^*}^{2}+(1-p_{i^*}-2p_{i_-})a_{i_-}^{2} \nonumber \\
		&=&(1-p_1)a_1^{1}+(1-p_1-2p_2)a_2^{1}+(1-p_2)a_2^{2}+(1-p_2-2p_1)a_1^{2} \nonumber \\
		&=&2(1-\alpha)+(3\alpha -1)(p_1+p_2) = 1+\alpha + (1-3\alpha)p_3. \nonumber
	\end{eqnarray}	
	We also have 
	\begin{eqnarray}
		g_1(\boldsymbol{p})+g_2(\boldsymbol{p}) = 2(p_1+p_2) + 2 \alpha p_3 = 2 - 2(1-\alpha)p_3 \nonumber
	\end{eqnarray}		
	from the definition of $g$.
	By the definition of $\alpha$, $(1+\alpha)(1-\alpha)=3\alpha-1$ holds. We also have $f_1(\boldsymbol{p})+f_2(\boldsymbol{p})\geq 0, g_1(\boldsymbol{p})+g_2(\boldsymbol{p})\geq 0$ from $0 \leq p_3 \leq 1$. Then we obtain $f_1(\boldsymbol{p})+f_2(\boldsymbol{p}) = \frac{1+\alpha}{2}\{g_1(\boldsymbol{p})+g_2(\boldsymbol{p})\} \geq 0$. 
	
	However $\frac{1+\alpha}{2}=\frac{\sqrt{17}-1}{4}=c' > c$. It implies that $f_1(\boldsymbol{p})+f_2(\boldsymbol{p}) >  c\{g_1(\boldsymbol{p})+g_2(\boldsymbol{p})\}$. It is in contradiction with the supposition that $p_1, p_2, p_3$ satisfy $f_1(\boldsymbol{p})\leq cg_1(\boldsymbol{p})$ and $f_2(\boldsymbol{p})\leq cg_2(\boldsymbol{p})$.
\end{proof}

\section{A randomized algorithm for \texorpdfstring{$k\geq 3$}{k>=3}}
\subsection{Key lemmas}

The following two key lemmas determine the probability distribution of our algorithm. Depending on whether all $y_i$ are positive or not, we use the different idea. The first lemma deals with the case when there is $y_i$ with $y_i \leq 0$. The next lemma deals with the case when all $y_i$ are positive. 

\begin{lemma}\label{lem_kminus}
	Let $k\geq 3$, $y_1\geq y_2\geq \cdots \geq y_{k-1} > 0 > y_k$, and $f,g$ as defined in \eqref{defn_fg} for given $a_1,...,a_k, y_1,...,y_k, i^*, i_- $ with $i^* \neq i_-$. Then $f(\boldsymbol{p}) \leq cg(\boldsymbol{p})$ holds with $c=1-\frac{1}{k-1}$ if
	\begin{equation}
		p_i =y_i^{k-2}/\beta (i=1,...,k-1), p_k=0 \nonumber
	\end{equation}
	where $\beta=\sum_{i=1}^{k-1}y_i^{k-2}$.
\end{lemma}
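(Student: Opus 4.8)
The plan is to verify the inequality $f(\boldsymbol p)\le c\, g(\boldsymbol p)$ with $c=1-\frac1{k-1}=\frac{k-2}{k-1}$ directly from the closed-form probability distribution, reducing everything to Lemma~\ref{iwalem} applied to the positive coordinates $y_1,\dots,y_{k-1}$. First I would record the structural consequences of the hypotheses: since $y_k<0\le y_i$ for $i\le k-1$ and $a_i\le y_i$, the only index that can carry a negative $a_i$ among $\{1,\dots,k-1\}$ would force $y_i>0$; but more to the point, $a_k\le y_k<0$, so if a negative $a_i$ exists at all, orthant submodularity together with pairwise monotonicity pins down $i_-=k$ whenever $a_k<0$, and otherwise $i_-$ (if it exists) is some index $<k$ with $y_{i_-}>0$. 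I would split the argument along the three branches of the definition \eqref{defn_fg} of $f$.

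Next, the easy branches. If $i^*=i_-$ (and $a_{i_-}<0$), then $f(\boldsymbol p)=0$; since $p_k=0$ and the remaining $p_i\propto y_i^{k-2}\ge 0$ with the $y_i$ positive, $g(\boldsymbol p)=\sum_{i=1}^{k-1}y_i p_i\ge 0$, so the inequality is immediate (here I use $c\ge 0$, valid for $k\ge 2$). If all $a_i\ge0$, then $f(\boldsymbol p)=(1-p_{i^*})a_{i^*}\le (1-p_{i^*})y_{i^*}=\sum_{i\ne i^*}p_i y_{i^*}$, and I need $\sum_{i\ne i^*}p_i y_{i^*}\le c\sum_i p_i y_i$. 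When $i^*\le k-1$ this is exactly Lemma~\ref{iwalem} applied with "$k$" replaced by $k-1$ (the distribution $p_i=y_i^{k-2}/\beta$ over the first $k-1$ positive coordinates is precisely the one in Algorithm~\ref{kmax} with exponent $\alpha=k-2$), giving the bound with $c=1-\frac1{k-1}$; and $y_k<0$ only helps, since replacing $y_{i^*}$ by $y_i$ on the right and dropping the $i=k$ term both go the right way — I'd just make sure the sign bookkeeping is clean, noting $p_k=0$ kills the $y_k$ term on both sides. The case $i^*=k$ cannot arise with all $a_i\ge0$ and $f(\boldsymbol p)>0$ because $a_k<0$; and if $f(\boldsymbol p)\le 0$ we are done trivially.

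The remaining branch is $a_{i_-}<0$, $i_-\ne i^*$. As noted, this forces $i_-=k$, so $f(\boldsymbol p)=(1-p_{i^*})a_{i^*}+(1-p_{i^*}-2p_k)a_k=(1-p_{i^*})a_{i^*}+(1-p_{i^*})a_k$ since $p_k=0$. By pairwise monotonicity $a_k\ge -a_{i^*}$ is false in general, but $a_{i^*}+a_k\ge 0$ gives $(1-p_{i^*})(a_{i^*}+a_k)\le(1-p_{i^*})\cdot 2a_{i^*}$? — that is the wrong direction, so instead I bound $f(\boldsymbol p)=(1-p_{i^*})(a_{i^*}+a_k)$; if this is $\le 0$ we are done, and if it is positive then $a_{i^*}+a_k>0$, hence $a_{i^*}>0$, and we can drop the $a_k$ term to get $f(\boldsymbol p)\le(1-p_{i^*})a_{i^*}\le(1-p_{i^*})y_{i^*}$, returning us to exactly the estimate handled in the previous paragraph via Lemma~\ref{iwalem} (again $i^*\le k-1$ since $a_{i^*}>0>a_k$ forces $i^*\ne k$). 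I expect the main obstacle to be nothing deep but rather the careful case analysis ensuring that in every subcase where $f(\boldsymbol p)>0$ we genuinely have $i^*\in\{1,\dots,k-1\}$ and the right-hand quantity we need is of the form $\sum_{i\ne i^*}p_i y_{i^*}$, so that Lemma~\ref{iwalem} (with $k-1$ in place of $k$) applies verbatim; the degenerate sub-case where $\beta=\sum_{i=1}^{k-1}y_i^{k-2}=0$ does not occur here because $y_1>0$.
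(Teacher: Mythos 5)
Your proposal is correct and follows essentially the same route as the paper's own proof: reduce to $f(\boldsymbol{p})\le(1-p_{i^*})y_{i^*}$ in every nontrivial case, note that $p_k=0$ removes the negative coordinate, and invoke Lemma~\ref{iwalem} with $k-1$ in place of $k$ for the power-$(k-2)$ distribution over $y_1,\dots,y_{k-1}$. The only cosmetic difference is some exploratory backtracking in your handling of the $i_-=k\ne i^*$ branch before settling on the (correct) step of dropping $(1-p_{i^*})a_k\le0$, whereas the paper drops that term directly.
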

\begin{proof}
	From the definition \eqref{cond_OS}, we have $a_k\leq y_k < 0$ and $i_- = k$. From the definition of $y_1,...,y_k$ and $p_1,...,p_k$, we have $g(\boldsymbol{p})\geq 0$. Hence $f(\boldsymbol{p}) \leq \left(1-\frac{1}{k-1}\right)g(\boldsymbol{p})$ holds for $i_- = i^*$ if $a_{i_-}<0$. 
	
	Suppose $a_{i_-}<0$ and $i_- \neq  i^*$. In this case $i_- = k \neq i^*$ holds. From \eqref{cond_OS}, $p_{i_-}=0$ and $a_{i_-} <0$, we have 
	\begin{equation}
		f(\boldsymbol{p})=(1-p_{i^*})(a_{i^*}+a_{i_-}) \leq (1-p_{i^*})a_{i^*} \leq (1-p_{i^*})y_{i^*}. \nonumber
	\end{equation}
	From Lemma \ref{iwalem}, $y_1,...,y_{k-1}>0$ and $p_k=0$, we obtain
	\begin{equation}
		\left(1-\frac{1}{k-1}\right)\sum_{i=1}^{k-1} p_{i}y_{i} \geq \sum_{i\neq i^*,k} p_{i}y_{i^*} = (1-p_{i^*})y_{i^*} \tag{\ref{iwathm02}'}.
	\end{equation}
	Hence $f(\boldsymbol{p}) \leq \left(1-\frac{1}{k-1}\right)g(\boldsymbol{p})$ holds for $a_{i_-}<0$ and $i_- = i^*$. 
	
	Now suppose  $a_i\geq 0$ for any $i$. In this case we have 
	\begin{equation}
		f(\boldsymbol{p})=\leq (1-p_{i^*})a_{i^*} \leq (1-p_{i^*})y_{i^*} \nonumber
	\end{equation}
	from \eqref{cond_OS}. Hence, from (\ref{iwathm02}'), $f(\boldsymbol{p}) \leq \left(1-\frac{1}{k-1}\right)g(\boldsymbol{p})$ holds.
\end{proof}	

\begin{lemma}\label{thm_ippan}
	Let $k\geq 3$, $y_1\geq y_2\geq \cdots \geq y_k>0$, and $f,g$ as defined in \eqref{defn_fg} for given $a_1,...,a_k$, $y_1,...,y_k$, $i^*, i_-$. Then $f(\boldsymbol{p})\leq cg(\boldsymbol{p})$ holds with $c=\frac{1}{1+\epsilon}$ if
	\begin{eqnarray}\label{prob_k_ippan}
		\begin{cases}
			p_1=1-\frac{2\beta}{1+2\beta}, p_2=p_3=\cdots =p_k=\frac{2\beta}{(k-1)(1+2\beta)}	&(l=0)\\

			p_1=1-\frac{(k-1)\beta}{(k-1)+\beta}, p_2=p_3=\cdots =p_k=\frac{\beta}{(k-1)+\beta}	&(l=1)\\

			p_1=p_2=\cdots=p_l=1/l, p_{l+1}=\cdots =p_k=0 &(l\geq 2)
		\end{cases}
	\end{eqnarray}
	where $\beta = y_2/y_1$, $\epsilon > 0$ satisfies
	\begin{eqnarray}
			\frac{\sqrt{2}}{\sqrt{1+\epsilon}}-\frac{\epsilon}{1+\epsilon} \geq  1+\epsilon, \label{eps_limit_l0}\\
			\frac{1}{k-1}+\frac{1-\epsilon}{1+\epsilon} \geq  1+\epsilon, \label{eps_limit_l1}\\
			\frac{1}{k-1}\prod_{j=2}^{k-1}\left( 1+\frac{1}{j(1+\epsilon)} \right) \geq \frac{1+2\epsilon}{2}, \label{eps_limit_lk}
	\end{eqnarray} 

	and $l$ is obtained from flowchart Fig. {\rm \ref{fig:FC}} for given $y_1,...,y_k$. 
\end{lemma}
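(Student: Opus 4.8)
The plan is to follow the blueprint of the $k=3$ analysis (Lemma~\ref{k3_c}, Claims~\ref{claim_gf} and \ref{GH_claim}). Since every entry of each distribution in \eqref{prob_k_ippan} is nonnegative and all $y_i>0$, we have $g(\boldsymbol{p})=\sum_i y_i p_i\geq 0$; hence $f(\boldsymbol{p})\leq cg(\boldsymbol{p})$ is automatic whenever $f(\boldsymbol{p})\leq 0$ and, by \eqref{defn_fg}, whenever $a_{i_-}<0$ and $i^*=i_-$ (there $f(\boldsymbol{p})=0$). It therefore remains to treat the case $f(\boldsymbol{p})>0$ with $i^*\neq i_-$ (when some $a_i<0$), where the claim is equivalent to $g(\boldsymbol{p})/f(\boldsymbol{p})\geq 1+\epsilon=1/c$.

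Next I would record a uniform upper bound on $f(\boldsymbol{p})$, exactly as Table~\ref{k_3_table} does for $k=3$: combining $a_i\leq y_i$ (orthant submodularity) with $a_i+a_j\geq 0$ (pairwise monotonicity) gives
\[
f(\boldsymbol{p})\leq
\begin{cases}
(1-p_{i^*})y_{i^*} & \text{if } 1-p_{i^*}-2p_{i_-}\geq 0 \text{ or } a_i\geq 0\ \forall i,\\[3pt]
(1-p_{i^*})y_{i^*}+(2p_{i_-}+p_{i^*}-1)y_{j_0} & \text{if } 1-p_{i^*}-2p_{i_-}<0,
\end{cases}
\]
where $y_{j_0}=\min_{j\neq i_-}y_j$ (so $j_0=k$ unless $i_-=k$). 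Because $\beta=y_2/y_1\leq 1$ and $k\geq 3$, every distribution in \eqref{prob_k_ippan} has $p_1\geq p_2=\cdots$, so the coefficient $1-p_{i^*}-2p_{i_-}$ is negative only for a short explicit list of $(i^*,i_-)$ (in the $l\geq 2$ branch only when $l=2$ and $\{i^*,i_-\}=\{1,2\}$); in each case the right-hand side is a fixed nonnegative combination of the $y_i$ whose maximum over $(i^*,i_-)$ is attained at an explicit pair.

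Then I would split along the three branches $l=0$, $l=1$, $l\geq 2$ of the flowchart Fig.~\ref{fig:FC}. For a fixed branch the distribution \eqref{prob_k_ippan} has at most three distinct entries, so $g(\boldsymbol{p})$ collapses to a short closed form: $g(\boldsymbol{p})=p_1y_1+q\sum_{i=2}^k y_i$ for $l\in\{0,1\}$ (with $q$ the common small entry), and $g(\boldsymbol{p})=\tfrac{1}{l}\sum_{i=1}^l y_i$ for $l\geq 2$. Dividing the target $g(\boldsymbol{p})\geq(1+\epsilon)f(\boldsymbol{p})$ by $y_1$ and using the flowchart's branch conditions -- threshold constraints on the consecutive ratios $y_{j+1}/y_j$ built from $\epsilon$, which control the only remaining degrees of freedom (notably the term $y_{j_0}$ and, in the $l\geq 2$ branch, $y_{i^*}$ for $i^*>l$) -- the inequality reduces to a single-variable estimate in the free ratio $\beta$ for $l=0,1$, whose extreme values are $\frac{\sqrt2}{\sqrt{1+\epsilon}}-\frac{\epsilon}{1+\epsilon}$ and $\frac{1}{k-1}+\frac{1-\epsilon}{1+\epsilon}$, and to a telescoping of the ratio constraints for $l\geq 2$, whose extreme value is $\frac{1}{k-1}\prod_{j=2}^{k-1}\bigl(1+\frac{1}{j(1+\epsilon)}\bigr)$ compared against the bound $\frac{1+2\epsilon}{2}$ coming from the $l=2$ subcase. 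These are precisely \eqref{eps_limit_l0}, \eqref{eps_limit_l1}, and \eqref{eps_limit_lk}, so the lemma follows.

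I expect the $l\geq 2$ branch to be the main obstacle. There one must (i) verify that the flowchart always terminates at a value of $l$ whose branch condition makes the estimate close -- informally, $l$ is increased while the next coordinate $y_{l+1}$ stays small relative to the running average $\tfrac{1}{l}\sum_{i\leq l}y_i$, and the product in \eqref{eps_limit_lk} tracks the worst chain of near-threshold increments -- and (ii) handle simultaneously the subcase $i^*>l$ (where $f(\boldsymbol{p})\leq y_{l+1}$ must be beaten by $g(\boldsymbol{p})=\tfrac{1}{l}\sum_{i\leq l}y_i$, which is exactly why the flowchart may not stop too early) and the $l=2$, $\{i^*,i_-\}=\{1,2\}$ subcase, the only place the second line of the $f$-bound is needed and the source of the factor $\tfrac{1+2\epsilon}{2}$. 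The $l=0$ and $l=1$ branches are comparatively routine single-variable calculus in $\beta$.
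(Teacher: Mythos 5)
Your sketch follows the paper's proof essentially verbatim: after reducing to $f(\boldsymbol{p})>0$ with $i^*\neq i_-$, the argument tabulates the upper bound on $f(\boldsymbol{p})$ case by case (the paper's Tables 2--5), divides $g(\boldsymbol{p})\geq(1+\epsilon)f(\boldsymbol{p})$ by $y_1$, and shows each branch of the flowchart reduces to one of the three $\epsilon$-inequalities. One mislocation in your account of the $l\geq 2$ branch is worth fixing before you flesh it out. The paper treats $l=2$ and $3\leq l\leq k$ separately, and your "telescoping for $l\geq 2$" phrasing conflates them: for $l=2$ the product is empty, the two relevant bounds $f(\boldsymbol{p})\leq(y_1+y_k)/2$ (from the $\{i^*,i_-\}=\{1,2\}$ entry) and $f(\boldsymbol{p})\leq y_3$ are beaten \emph{directly} by the flowchart stopping conditions $y_k\leq(y_2-\epsilon y_1)/(1+\epsilon)$ and $y_3\leq(y_1+y_2)/(2(1+\epsilon))$, and the factor $\tfrac{1+2\epsilon}{2}$ plays no role at $l=2$. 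That factor in \eqref{eps_limit_lk} is instead generated inside the telescoping for $l\geq 3$: from the escape condition $y_3>(y_1+y_2)/(2(1+\epsilon))$ one deduces $y_2>y_1/(1+2\epsilon)$, hence $y_1+y_2>\tfrac{2(1+\epsilon)}{1+2\epsilon}\,y_1$, and dividing the resulting lower bound on $\tfrac{1}{l-1}\sum_{i\leq l}y_i$ by $1+\epsilon$ leaves precisely $\tfrac{1+2\epsilon}{2}$ on the right of \eqref{eps_limit_lk}. So the $\tfrac{1+2\epsilon}{2}$ comes from the fact that the flowchart \emph{escaped} $l=2$, not from the $l=2$, $\{i^*,i_-\}=\{1,2\}$ subcase itself.
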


\begin{figure}[ht]
	\centering
	\includegraphics[width=0.9\linewidth]{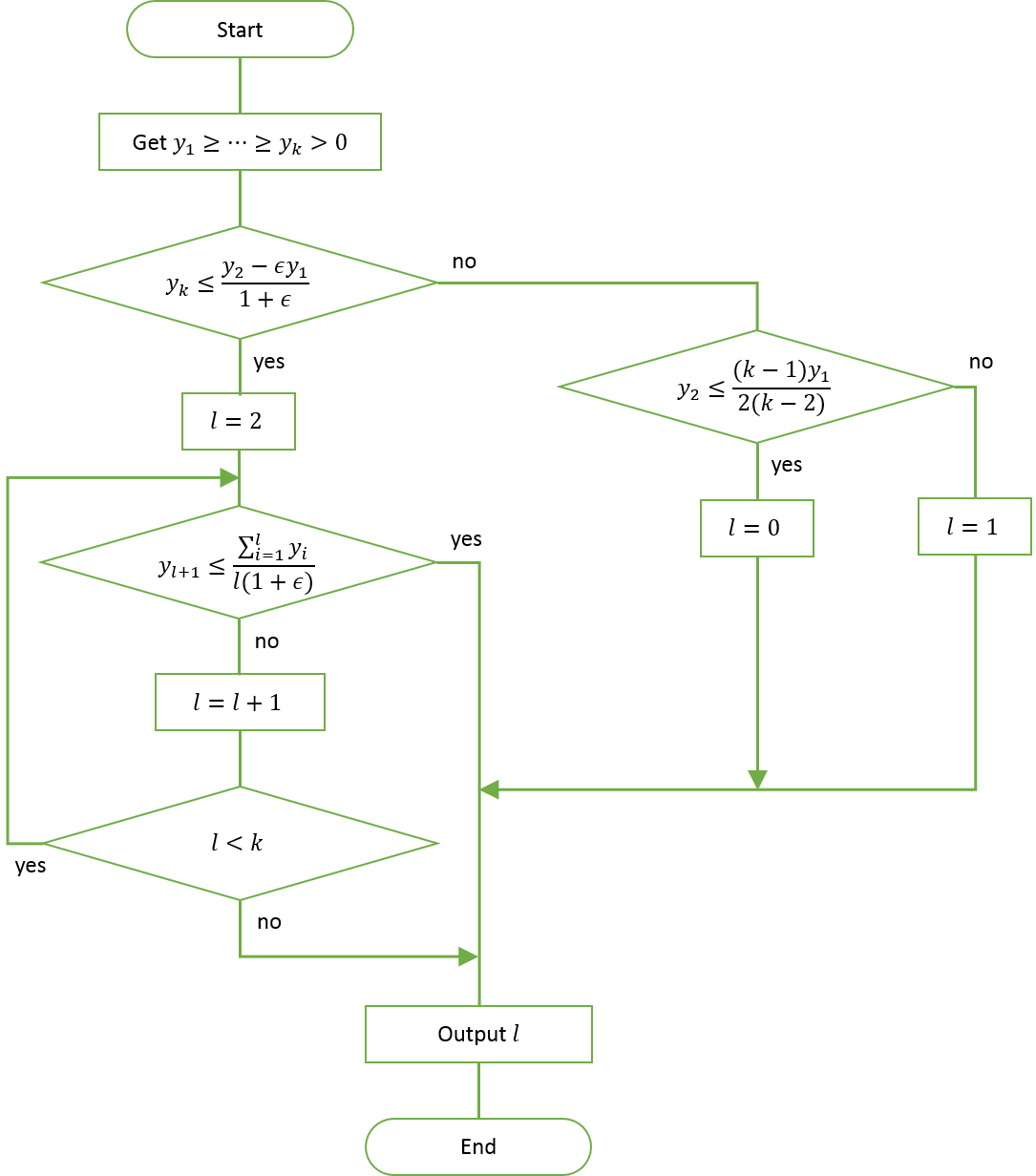}
	\caption{Flowchart to determine $l$.}
	\label{fig:FC}
\end{figure}

Because the proof of Lemma \ref{thm_ippan} is long, we postpone proving it until Section \ref{sec_pr_thm_ippan}.

\subsection{The improved algorithm}

Now we show that there exists $\epsilon > 0$, which satisfies \eqref{eps_limit_l0}, \eqref{eps_limit_l1}, and \eqref{eps_limit_lk}.

\begin{lemma}
	$\epsilon$, where $0<\epsilon \leq 1/k^2$, satisfies \eqref{eps_limit_l0}, \eqref{eps_limit_l1} and \eqref{eps_limit_lk}.
\end{lemma}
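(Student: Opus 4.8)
The plan is to verify the three inequalities \eqref{eps_limit_l0}, \eqref{eps_limit_l1}, and \eqref{eps_limit_lk} one at a time for $\epsilon \in (0, 1/k^2]$, using that each left-hand side is monotone decreasing in $\epsilon$ (or can be bounded below by something monotone), so it suffices to check the worst case $\epsilon = 1/k^2$ — or, where convenient, to produce a uniform lower bound valid on the whole interval. Since $k \geq 3$ and $\epsilon \leq 1/k^2 \leq 1/9$, we have a small positive $\epsilon$ to play with, and all three right-hand sides are of the form $1 + O(\epsilon)$, so the strategy is to show each left-hand side exceeds $1 + (\text{const})\epsilon$ for the relevant constant.

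For \eqref{eps_limit_l1}, which reads $\frac{1}{k-1} + \frac{1-\epsilon}{1+\epsilon} \geq 1+\epsilon$, I would rearrange: $\frac{1-\epsilon}{1+\epsilon} = 1 - \frac{2\epsilon}{1+\epsilon}$, so the inequality becomes $\frac{1}{k-1} \geq \epsilon + \frac{2\epsilon}{1+\epsilon} = \frac{\epsilon(3+\epsilon)}{1+\epsilon}$. Since $\frac{3+\epsilon}{1+\epsilon} \leq 3$ for $\epsilon \geq 0$ and $\epsilon \leq 1/k^2$, the right side is at most $3/k^2$, so it suffices that $3/k^2 \leq 1/(k-1)$, i.e. $3(k-1) \leq k^2$, which holds for all $k \geq 1$ (indeed $k^2 - 3k + 3 > 0$). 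For \eqref{eps_limit_l0}, $\frac{\sqrt{2}}{\sqrt{1+\epsilon}} - \frac{\epsilon}{1+\epsilon} \geq 1+\epsilon$: here I would use $\frac{1}{\sqrt{1+\epsilon}} \geq 1 - \frac{\epsilon}{2}$ (from $(1-\epsilon/2)^2 = 1 - \epsilon + \epsilon^2/4 \leq 1+\epsilon$ — wait, one needs $1 - \epsilon + \epsilon^2/4 \le 1 + \epsilon$ which is clear) to get LHS $\geq \sqrt{2}(1 - \epsilon/2) - \epsilon \geq \sqrt 2 - (\tfrac{\sqrt2}{2}+1)\epsilon$, and since $\sqrt 2 \approx 1.414$ while $1 + \epsilon + (\tfrac{\sqrt2}{2}+1)\epsilon = 1 + (2 + \tfrac{\sqrt2}{2})\epsilon \le 1 + 3\epsilon$ and $\epsilon \le 1/9$ gives a comfortable gap $\sqrt 2 - 1 > 3/9$, the inequality holds; alternatively just check at $\epsilon = 1/k^2$ directly with crude bounds.

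For \eqref{eps_limit_lk}, $\frac{1}{k-1}\prod_{j=2}^{k-1}\bigl(1 + \frac{1}{j(1+\epsilon)}\bigr) \geq \frac{1+2\epsilon}{2}$, the plan is to telescope the product. Writing $\prod_{j=2}^{k-1}\bigl(1 + \frac{1}{j}\bigr) = \prod_{j=2}^{k-1}\frac{j+1}{j} = \frac{k}{2}$ is the key identity; the factor $1 + \frac{1}{j(1+\epsilon)}$ is smaller than $1 + \frac1j$ but not by much. I would bound $1 + \frac{1}{j(1+\epsilon)} \geq \bigl(1+\frac1j\bigr)^{1/(1+\epsilon)}$? — that is not obviously true, so instead I'd write $1 + \frac{1}{j(1+\epsilon)} = \frac{j(1+\epsilon)+1}{j(1+\epsilon)}$ and compare the whole product to $\frac{k}{2}$ by estimating the ratio $\prod_{j=2}^{k-1}\frac{j(1+\epsilon)+1}{(j+1)(1+\epsilon)} \cdot \frac{j+1}{j}$, or more simply use $1 + \frac{1}{j(1+\epsilon)} \geq (1+\frac1j)(1 - \frac{\epsilon}{j(1+\epsilon) + 1}) \geq (1+\frac1j)(1-\epsilon/j)$ to get the product $\geq \frac{k}{2}\prod_{j=2}^{k-1}(1-\epsilon/j) \geq \frac{k}{2}(1 - \epsilon H_{k-1})$ where $H_{k-1} = \sum_{j=2}^{k-1}\frac1j < \ln k$. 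Then the claimed inequality reduces to $\frac{k}{2(k-1)}(1 - \epsilon \ln k) \geq \frac{1+2\epsilon}{2}$, i.e. $\frac{k}{k-1}(1-\epsilon\ln k) \geq 1 + 2\epsilon$; since $\frac{k}{k-1} = 1 + \frac{1}{k-1}$, this is $1 + \frac{1}{k-1} - \epsilon\ln k (1 + \tfrac{1}{k-1}) \geq 1 + 2\epsilon$, i.e. $\frac{1}{k-1} \geq \epsilon\bigl(2 + \tfrac{k\ln k}{k-1}\bigr)$, and with $\epsilon \leq 1/k^2$ it suffices that $\frac{k^2}{k-1} \geq 2 + \frac{k\ln k}{k-1}$, i.e. $k^2 \geq 2(k-1) + k\ln k$, which holds for $k \geq 3$ since $k^2$ grows faster than $k\ln k$ (check $k=3$: $9 \geq 4 + 3\ln 3 \approx 7.3$, and the gap widens). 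The main obstacle is \eqref{eps_limit_lk}: getting a clean enough telescoping estimate of the product that survives comparison with $\frac{1+2\epsilon}{2}$ without slack issues — the bounds must be tight enough near $k=3$ where the product has few factors, so I'd double-check the $k=3$ case by hand (there the product is just the single factor $1 + \frac{1}{2(1+\epsilon)}$ and the inequality is directly verifiable).
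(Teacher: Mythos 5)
Your proposal is correct and takes a genuinely different route from the paper's. For \eqref{eps_limit_l0} and \eqref{eps_limit_l1} the paper defines $q_i(\epsilon)=\text{LHS}-\text{RHS}$, observes $q_i'<0$, and checks positivity at a single well-chosen point ($\epsilon=1/9$ and $\epsilon=1/3k$ respectively); you replace the calculus with direct algebraic estimates (the bound $1/\sqrt{1+\epsilon}\geq 1-\epsilon/2$, and rearranging \eqref{eps_limit_l1} to $3(k-1)\leq k^2$). For \eqref{eps_limit_lk} the divergence is more substantial: the paper uses the factor-wise bound $1+\tfrac{1}{j(1+\epsilon)}\geq \tfrac{1}{1+\epsilon}\bigl(1+\tfrac1j\bigr)$, which telescopes exactly to $\tfrac{k}{2(1+\epsilon)^{k-2}}$ and reduces the task to $(1+1/k^2)^k\leq k/(k-1)$, settled by the logarithm inequality $\tfrac1x<\ln x-\ln(x-1)<\tfrac{1}{x-1}$; you instead use $1+\tfrac{1}{j(1+\epsilon)}\geq \bigl(1+\tfrac1j\bigr)\bigl(1-\tfrac{\epsilon}{j}\bigr)$, telescope the first factor to $k/2$, apply a Weierstrass bound $\prod(1-\epsilon/j)\geq 1-\epsilon\sum 1/j$ together with $\sum_{j=2}^{k-1}1/j<\ln k$, and land on the purely polynomial condition $k^2\geq 2(k-1)+k\ln k$. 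Both work; the paper's telescoping is tighter and keeps the threshold sharper, while yours avoids the root-vs-log comparison entirely. Two small slips worth fixing: your parenthetical justification of $1/\sqrt{1+\epsilon}\geq 1-\epsilon/2$ checks $(1-\epsilon/2)^2\leq 1+\epsilon$, which is the wrong target — what you need is $(1-\epsilon/2)^2(1+\epsilon)\leq 1$, i.e.\ $\epsilon^2(\epsilon-3)\leq 0$, which does hold; and in the intermediate ratio computation the exact value is $1-\tfrac{\epsilon}{(j+1)(1+\epsilon)}$, not $1-\tfrac{\epsilon}{j(1+\epsilon)+1}$, though the chained inequality you actually use, $1+\tfrac{1}{j(1+\epsilon)}\geq(1+\tfrac1j)(1-\epsilon/j)$, is correct as stated.
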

\begin{proof}
	First, we show the statement for \eqref{eps_limit_l0}. Let $q_1(\epsilon) = \frac{\sqrt{2}}{\sqrt{1+\epsilon}}-\frac{\epsilon}{1+\epsilon} -  1-\epsilon$. We have $q'_1(\epsilon)<0$ for $\epsilon >0$.
	By putting $\epsilon=1/9$, we obtain $q_1(1/9)>0$. Hence, from $k \geq 3$, we have $q_1(\epsilon)>0$ for $0<\epsilon \leq 1/k^2 (\leq 1/9)$. 
	
	Second, we show the statement for \eqref{eps_limit_l1}. Let $q_2(\epsilon) = \frac{1}{k-1}+\frac{1-\epsilon}{1+\epsilon} -  1-\epsilon$. We have $q'_2(\epsilon) <0$ for $\epsilon >0$.
	By putting $\epsilon=1/3k$, we obtain $q_2(1/3k)>0$. Hence, from $k \geq 3$, we have $q_2(\epsilon)>0$ for $0<\epsilon \leq 1/k^2 (\leq 1/3k)$.
	
	Finally, we show the statement for \eqref{eps_limit_lk}. Let $q_3(\epsilon) = \frac{1}{k-1}\prod_{j=2}^{k-1}\left( 1+\frac{1}{j(1+\epsilon)} \right)- \frac{1+2\epsilon}{2}$. From $\epsilon \geq 0$, we have 
	\begin{eqnarray}
	q_3(\epsilon)&\geq & \frac{1}{k-1}\prod_{j=2}^{k-1}\left\{\frac{1}{1+\epsilon}\left(1+\frac{1}{j}\right)\right\}- \frac{1+2\epsilon}{2} \nonumber\\
	&=&\frac{1}{2}\left\{\frac{k}{k-1}\frac{1}{(1+\epsilon)^{k-2}}-(1+2\epsilon)\right\}\nonumber \\
	&\geq &\frac{k(1+\epsilon)^2}{2(k-1)}\left\{\frac{1}{(1+\epsilon)^{k}}-\frac{k-1}{k}\right\}.\nonumber 
	\end{eqnarray}
	So $q_3(\epsilon)\geq 0$ for $\epsilon \leq \sqrt[k]{\frac{k}{k-1}}-1$. From $\frac{1}{x-1} > \ln(x)-\ln(x-1)>\frac{1}{x}$, we obtain 
	\begin{equation}
		\frac{1}{k} \ln\left(\frac{k}{k-1}\right)> \frac{1}{k^2} > \ln\left(\frac{k^2+1}{k^2}\right).	\nonumber
	\end{equation} 
	Therefore $1+\frac{1}{k^2} < \sqrt[k]{\frac{k}{k-1}}$ holds and we have $q_3(\epsilon)\geq 0$ for $\epsilon \leq 1/k^2 (\leq  \sqrt[k]{\frac{k}{k-1}}-1)$.	
\end{proof}


Now we are ready to show the algorithm.

\begin{algorithm}
	\caption{A randomized $k$-submodular function maximization algorithm}\label{Algo_k_ippan}
	\begin{algorithmic}[1]
		\State{$\boldsymbol{s} \gets \boldsymbol{0}\ (\boldsymbol{s} \in \{ 0,1,...,k \}^{V})$.}
		\State{Let $\epsilon$ be the value which satisfies the following equations for $k$:
			\begin{align}
			\frac{\sqrt{2}}{\sqrt{1+\epsilon}}-\frac{\epsilon}{1+\epsilon} \geq  1+\epsilon, \tag{\ref{eps_limit_l0} } \\
			\frac{1}{k-1}+\frac{1-\epsilon}{1+\epsilon} \geq  1+\epsilon, \tag{\ref{eps_limit_l1}} \\
			\frac{1}{k-1}\prod_{j=2}^{k-1}\left( 1+\frac{1}{j(1+\epsilon)} \right) \geq \frac{1+2\epsilon}{2}. \tag{\ref{eps_limit_lk}}
			\end{align} 
		}
		\For{$t=1,...,n$}
		\State{$y_i\gets \Delta_{e,i} f(\boldsymbol{s})\ (i\in [k])$.}
		\State{Assume $y_1\geq y_2\geq \cdots \geq y_k$.}
		\If{$y_k<0$}
		\State{$\beta\gets \sum_{i=1}^k  y_i^{k-1}$.}
		\State{\ $p_i^{(t)} \gets \frac{y_i^{k-1}}{\beta}\ (i \leq k-1),\ 0\ (i=k)$.}
		\Else
		\If{$y_k > \frac{y_2-\epsilon y_1}{1+\epsilon}$.}
		\If{$y_2\leq \frac{k-1}{2(k-2)}y_1$}{\ $l\gets 0$.}
		\Else
		\State{$l\gets 1$.}
		\EndIf
		\Else	
		\State{$l\gets 2$.}
		\While{$l<k \ \text{and}\ y_{l+1} > \frac{\sum_{i=1}^l y_i}{l(1+\epsilon)}$}
		\State{$l\gets l+1$.}
		\EndWhile
		\EndIf
		\EndIf
		\If{$l\geq 2$}{\ $p_i^{(t)}\gets \begin{cases} 1/l &(i\leq l)\\ 0& (i>l) \end{cases}$.}
		\ElsIf{$l=1$}{\ $p_1^{(t)}\gets 1-\frac{(k-1)y_2}{(k-1)y_1+y_2}, p_i^{(t)}\gets \frac{y_2}{(k-1)y_1+y_2}\ (i\neq 1)$.}
		\Else
		\State{\ $p_1^{(t)}\gets 1-\frac{2y_2}{y_1+2y_2}, p_i^{(t)}\gets \frac{2y_2}{(k-1)(y_1+2y_2)}\ (i\neq 1)$.}
		\EndIf
		\State{Let $\boldsymbol{s}(e^{(t)}) \in [k]$ be chosen randomly with $\mathrm{Pr}[\boldsymbol{s}(e^{(t)})=i]=p_i^{(t)}$.}
		\EndFor \\
		\Return{$\boldsymbol{s}$}
	\end{algorithmic}
\end{algorithm}

By Lemma \ref{lem_kminus} and Lemma \ref{thm_ippan}, obtain the following theorem.
\begin{theorem}\label{app_kippan}
	For any nonmonotone $k$-submodular function with $k\geq 3$, there is randomized algorithm which satisfies $\frac{k^2+1}{2k^2+1}$-approximation. 
\end{theorem}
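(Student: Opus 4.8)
The plan is to show that Algorithm~\ref{Algo_k_ippan} is the required algorithm, by fitting it into the framework of Algorithm~\ref{meta_kmax} and invoking Lemma~\ref{k_sub_c_nonmono}. Concretely, it suffices to prove that at every iteration $t$ the probability vector $\boldsymbol p=(p_1,\dots,p_k)$ chosen by the algorithm satisfies the inequality of \eqref{ineq_nonmono}, i.e.\ $f(\boldsymbol p)\le c\,g(\boldsymbol p)$ in the notation of \eqref{defn_fg}, with the \emph{single} constant $c=\frac{k^2}{k^2+1}$, for every admissible $a_1,\dots,a_k,y_1,\dots,y_k,i^*,i_-$; Lemma~\ref{k_sub_c_nonmono} then gives $\mathbb{E}[f(\boldsymbol s)]\ge\frac{1}{1+c}f(\boldsymbol o)=\frac{k^2+1}{2k^2+1}f(\boldsymbol o)$, which is the claim. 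One fixes $\epsilon=1/k^2$; the lemma preceding Algorithm~\ref{Algo_k_ippan} shows this $\epsilon$ satisfies \eqref{eps_limit_l0}, \eqref{eps_limit_l1}, \eqref{eps_limit_lk}, so it is a legal choice in the algorithm, and then $\frac{1}{1+\epsilon}=\frac{k^2}{k^2+1}=c$ and $\frac{1+\epsilon}{2+\epsilon}=\frac{k^2+1}{2k^2+1}$.

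Next I would fix an iteration and, after relabelling coordinates exactly as the algorithm does, assume $y_1\ge y_2\ge\cdots\ge y_k$, and split on the sign pattern of the $y_i$. If $y_k<0$ (with $y_{k-1}>0$), the probability vector in the first branch of Algorithm~\ref{Algo_k_ippan} instantiates the hypothesis of Lemma~\ref{lem_kminus}, which yields $f(\boldsymbol p)\le(1-\tfrac1{k-1})g(\boldsymbol p)$ together with $g(\boldsymbol p)\ge0$; since $\tfrac{k-2}{k-1}\le\tfrac{k^2}{k^2+1}=c$, the bound $f(\boldsymbol p)\le c\,g(\boldsymbol p)$ follows. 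If all $y_i>0$, I would verify that the integer $l$ produced by the \textbf{if}/\textbf{while} block of the algorithm coincides with the output of the flowchart of Fig.~\ref{fig:FC}, and that the $p_i$ then assigned are precisely those of \eqref{prob_k_ippan} (substituting $\beta=y_2/y_1$ matches the $l=0$ and $l=1$ formulas, and the $l\ge2$ case is immediate); Lemma~\ref{thm_ippan} then gives $f(\boldsymbol p)\le\frac{1}{1+\epsilon}g(\boldsymbol p)=c\,g(\boldsymbol p)$. The remaining degenerate configurations are handled directly: if $y_1=0$ then all $y_i\le0$, so $g(\boldsymbol p)=0$ while $f(\boldsymbol p)\le0$ by \eqref{cond_OS} and \eqref{cond_PMa}; and boundary cases such as $y_k=0$ with some $y_i>0$ reduce to the two lemmas either by continuity or by observing that the vanishing coordinates get probability $0$ and the lemmas apply to the surviving block.

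I do not expect the obstacle for this theorem to lie in the theorem itself: granting Lemma~\ref{lem_kminus} and Lemma~\ref{thm_ippan}, the argument is essentially bookkeeping—checking that the case split is exhaustive, that the algorithm's concrete $l$-selection realizes Fig.~\ref{fig:FC}, and the two elementary inequalities $\tfrac{k-2}{k-1}\le\tfrac{k^2}{k^2+1}$ and $\tfrac{1+1/k^2}{2+1/k^2}=\tfrac{k^2+1}{2k^2+1}$. The genuine difficulty is buried in Lemma~\ref{thm_ippan}, whose proof is deferred to Section~\ref{sec_pr_thm_ippan}; that is where the flowchart-driven analysis over $l$ and the three constraints \eqref{eps_limit_l0}--\eqref{eps_limit_lk} on $\epsilon$ must be matched against the upper bounds on $f(\boldsymbol p)$. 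A secondary point to state carefully is the use of one common $c$ across all iterations, which is legitimate only because both lemmas also establish $g(\boldsymbol p)\ge0$, allowing the weaker bound in the $y_k<0$ branch to be relaxed to $c=\frac{k^2}{k^2+1}$.
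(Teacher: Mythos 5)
Your proposal is correct and follows essentially the same route as the paper: choose $\epsilon=1/k^2$, split on the sign of $y_k$, invoke Lemma~\ref{lem_kminus} (giving $c=\tfrac{k-2}{k-1}$) in the negative branch and Lemma~\ref{thm_ippan} (giving $c=\tfrac{1}{1+\epsilon}=\tfrac{k^2}{k^2+1}$) in the positive branch, relax to the common $c=\tfrac{k^2}{k^2+1}$ using $g(\boldsymbol p)\ge 0$, and finish with Lemma~\ref{k_sub_c_nonmono}. In fact you state the constant more carefully than the paper does (the paper writes $c=1-1/k^2$ where the intended value is $\tfrac{1}{1+\epsilon}=\tfrac{k^2}{k^2+1}$, the one consistent with $\tfrac{1}{1+c}=\tfrac{k^2+1}{2k^2+1}$), and you flag the degenerate boundary configurations ($y_k=0$, $y_1=0$) that the paper's two-line argument glosses over.
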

\begin{proof}
	From appropriate permutation $\sigma$, we can set $y_{\sigma(1)}\geq y_{\sigma(2)}\geq \cdots \geq y_{\sigma(k)}$. If $y_{\sigma(k)}>0$, from Theorem \ref{thm_ippan}, there are some $\{p_i\}$ satisfy $f(\boldsymbol{p})\leq cg(\boldsymbol{p})$ with $c=1-1/k^2$. If $y_{\sigma(k)}\leq 0$, from Lemma \ref{lem_kminus}, there are some $\{p_i\}$ satisfy $f(\boldsymbol{p})\leq cg(\boldsymbol{p})$  with $c=1-\frac{1}{k-1}$. From the definition of $g$, we have $g(\boldsymbol{p}) \geq 0$ for any $\boldsymbol{p}$. Therefore, regardless of the sign of $y_{\sigma(k)}$, we can obtain $\{p_i\}$ which satisfies $f(\boldsymbol{p})\leq cg(\boldsymbol{p})$ with $c=1-1/k^2$.
	
	Then, from Lemma \ref{k_sub_c_nonmono}, we have a randomized $\frac{k^2+1}{2k^2+1}$-approximation algorithm. 
\end{proof}

\begin{figure}[ht]
	\centering
	\includegraphics[width=0.7\linewidth]{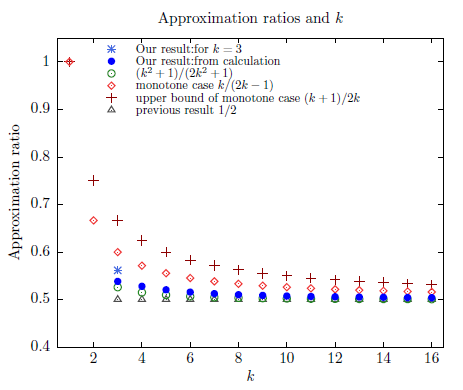}
	\caption{Approximation ratios and $k$}
	\label{fig:comp}
\end{figure}
\begin{figure}[ht]
	\centering
	\includegraphics[width=0.7\linewidth]{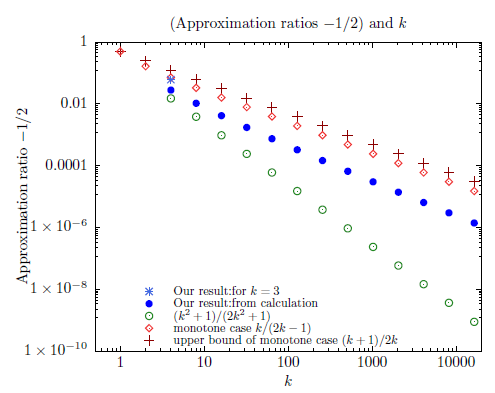}
	\caption{(Approximation ratios - $1/2$) and $k$ in a logarithmic graph}
	\label{fig:comp_sub}
\end{figure}
The approximation ratios of our randomized algorithms and previous results are compared in F{\footnotesize IG}. \ref{fig:comp} and F{\footnotesize IG}. \ref{fig:comp_sub}. If we use $\epsilon$, which is larger than $1/k^2$ and satisfies \ref{eps_limit_l0}, \ref{eps_limit_l1} and \ref{eps_limit_lk}, we obtain an approximation better than $\frac{k^2+1}{2k^2+1}$. And $\frac{\sqrt{17}-3}{2}$ is better than the ratio of our algorithm for $k\geq 3$. 

\subsection{Proof of Lemma \ref{thm_ippan}}\label{sec_pr_thm_ippan}
In this section, we give a proof of Lemma \ref{thm_ippan}. From \eqref{prob_k_ippan}, we have $p_1\geq \cdots \geq p_k \geq 0$. And, from the definition in the statement of Lemma \ref{thm_ippan}, we have $y_i \geq 0$ for any $i$, and $g(\boldsymbol{p}) \geq 0$. Therefore, if $f(\boldsymbol{p}) \leq 0$, we have $f(\boldsymbol{p})\leq cg(\boldsymbol{p})$. Hence, from the definition of $f(\boldsymbol{p})$, $f(\boldsymbol{p})\leq cg(\boldsymbol{p})$ holds for the case $a_{i_-}<0$ and $i_-\neq i^*$.

Now suppose $f(\boldsymbol{p}) > 0$ and $i_-\neq i^*$ if $a_{i_-}<0$. Let 
\begin{equation}
	\hat{f}(\boldsymbol{p}):= (1-p_{i^*})a_{i^*}+\max 
	\begin{pmatrix}
		(1-p_{i^*}-2p_{i_-})a_{i_-}, \\
		0 
	\end{pmatrix},
\end{equation}
We prove 
\begin{equation}
	\frac{g(\boldsymbol{p})}{f(\boldsymbol{p})} \geq 1+\epsilon. \nonumber 
\end{equation}
for each $l$. We split the proof according to the value of $l$.

\subsubsection{Proof of \ref{thm_ippan} for the case \texorpdfstring{$l=0$}{l=0}}\label{subsecl0}
In this section, suppose $l=0$. Let $\alpha=\frac{2\beta}{(k-1)(1+2\beta)}$. Then $\{p_i\}$ can be written as 
\begin{equation}\label{prob_l01}
	p_1=1-(k-1)\alpha,\ p_2=\cdots=p_k=\alpha\  (0\leq \alpha \leq 1/k).
\end{equation}

Note  that $p_1\geq p_2 = \cdots = p_k$. It is convenient to have a list of the values of $1-p_{i^*}-2p_{i_-}$ and $f(\boldsymbol{p})$, for each case of $i^*$ and $i_-$ on Table \ref{k_ippan_l0}. For the inequalities in the column of $1-p_{i^*}-2p_{i_-}$, we used $k\geq 3$ and $p_1\geq p_2 = \cdots = p_k$ which follows from the definition. For the inequalities in the column of $f(\boldsymbol{p})$, we used $y_1 =y_{i^*} \geq a_{i^*}$ if $i^*=1$, $y_2 \geq y_{i^*} \geq a_{i^*}$ if $i^* \neq 1$ and $|a_{i_-}|\leq a_k \leq y_k$ if $i_-\neq k$. These inequalities follow from the definition of $\{p_i\}$, \eqref{cond_OS} and \eqref{cond_PMa}. We also have $y_2\leq \frac{(k-1)y_1}{2(k-2)}$ from the flowchart (F{\footnotesize IG} \ref{fig:FC}). Therefore $\alpha = \frac{2\beta}{(k-1)(1+2\beta)}\leq 1/(2k-3)$ holds. 

\begin{table}[ht]
	\centering
	\caption{$f(\boldsymbol{p})$ in the case $l=0$.}
	\scalebox{0.8}[0.8]{
		\begin{tabular}{cc|c|rcl}\label{k_ippan_l0}
			$p_{i^*}$ & $p_{i_-}$ & $1-p_{i^*}-2p_{i_-}$ & $f(\boldsymbol{p})$ \\[2pt] 
			\hline 
			$1-(k-1)\alpha$ & $\alpha$ & $(k-3)\alpha \geq 0$ & $(k-1)\alpha a_{i^*}+(k-3)\alpha a_{i_-}$&$\leq$&$ (k-1)\alpha y_1$ \\[2pt] 
			$\alpha$ & $1-(k-1)\alpha$ & $(2k-3)\alpha-1 \leq0$ &\multicolumn{3}{l}{$(1-\alpha)a_{i^*}+(1-(2k-3)\alpha)|a_{i_-}|$}\\[2pt] 
			& 					& 						&\multicolumn{3}{r}{$\leq (1-\alpha)y_2+(1-(2k-3)\alpha)y_k$} \\[2pt] 
			$\alpha$ & $\alpha$ & $1-3\alpha>0$ & $(1-\alpha)a_{i^*}+(1-3\alpha)a_{i_-}$&$\leq$&$ (1-\alpha)y_2$ \\[2pt] 
			$1-(k-1)\alpha$ & None & None & $(k-1)\alpha a_{i^*} $&$\leq$&$ (k-1)\alpha y_1$ \\[2pt] 
			$\alpha$ & None & None & $(1-\alpha)a_{i^*}$&$\leq$&$ (1-\alpha)y_2$ \\[2pt] 
		\end{tabular} 
	}
\end{table}

By putting $p_1,...,p_k$ in Table \ref{k_ippan_l0}, we have 
\begin{eqnarray}
	f(\boldsymbol{p})\leq
	\begin{cases}
		(k-1)\alpha y_1 & (p_{i^*}=1-(k-1)\alpha)\\
		(1-\alpha)y_2+(1-(2k-3)\alpha)y_k& (p_{i^*}=\alpha)
	\end{cases}.\nonumber
\end{eqnarray}

From $y_2\geq y_k$,
\begin{eqnarray}
	f(\boldsymbol{p})\leq
	\begin{cases}
		(k-1)\alpha y_1 & (p_{i^*}=1-(k-1)\alpha)\\
		2(1-(k-1)\alpha)y_2& (p_{i^*}=\alpha)
	\end{cases}\nonumber
\end{eqnarray}
holds. By putting $\alpha=\frac{2\beta}{(k-1)(1+2\beta)}$, we obtain
\begin{equation}
	f(\boldsymbol{p}) \leq \frac{2y_1\beta}{1+2\beta}.\nonumber
\end{equation}
On the other hand, we have 
\begin{eqnarray}
	g(\boldsymbol{p})&=&\left( 1-\frac{2\beta}{1+2\beta} \right)y_1 + \frac{2\beta}{(k-1)(1+2\beta)}(y_2+\cdots +y_k) \nonumber\\
	&\geq& \frac{y_1}{1+2\beta}+\frac{2\beta y_k}{1+2\beta}\nonumber\\
	&>& \frac{y_1}{1+2\beta}+\frac{2\beta}{1+2\beta}\frac{(\beta-\epsilon) y_1}{1+\epsilon}\nonumber
\end{eqnarray}
by $y_k > (y_2-\epsilon y_1)/(1+\epsilon)$, which follows from the flowchart. Therefore,  
\begin{eqnarray}
	\frac{g(\boldsymbol{p})}{f(\boldsymbol{p})}& \geq &\frac{1+2\beta(\beta-\epsilon)/(1+\epsilon)}{2\beta}\nonumber\\
	&=&\frac{1}{2\beta}+\frac{\beta}{1+\epsilon}-\frac{\epsilon}{1+\epsilon} \nonumber
\end{eqnarray}
holds. 

Let $h_0(\beta)=\frac{1}{2\beta}+\frac{\beta}{1+\epsilon}-\frac{\epsilon}{1+\epsilon}$. From the definition, $\beta = \frac{y_2}{y_1} \geq 0$ and $h_0$ is minimized when $\beta^2=(1+\epsilon)/2$ for $\beta \geq 0$. Then we have 
\begin{equation}
\min h_0 (\beta) = \frac{\sqrt{2}}{\sqrt{1+\epsilon}}-\frac{\epsilon}{1+\epsilon}.\nonumber%
\end{equation}

From \eqref{eps_limit_l0}, the definition of $\epsilon$, we have $\frac{g(\boldsymbol{p})}{f(\boldsymbol{p})}\geq 1+\epsilon$.

\subsubsection{Proof of \ref{thm_ippan} for the case \texorpdfstring{$l=0$}{l=0}}\label{subsecl1}
In this section, suppose $l=1$. Let $\alpha=\frac{\beta}{(k-1)+\beta}$. Then $\{p_i\}$ can be written as 
\begin{equation}
	p_1=1-(k-1)\alpha,\ p_2=\cdots=p_k=\alpha\  (0\leq \alpha \leq 1/k). \tag{\ref{prob_l01}}
\end{equation}

Note  that $p_1\geq p_2 = \cdots = p_k$. It is convenient to have a list of the values of $1-p_{i^*}-2p_{i_-}$ and $f(\boldsymbol{p})$, for each case of $i^*$ and $i_-$ on Table \ref{k_ippan_l1}. For the inequalities in the column of $1-p_{i^*}-2p_{i_-}$, we used $k\geq 3$ and $p_1\geq p_2 = \cdots = p_k$ which follows from the definition. For the inequalities in the column of $f(\boldsymbol{p})$, we used $y_1 =y_{i^*} \geq a_{i^*}$ if $i^*=1$ and $y_2 \geq y_{i^*} \geq a_{i^*}$ if $i^* \neq 1$. These inequalities follow from the definition of $\{p_i\}$ and \eqref{cond_OS}. We also have $y_2 > \frac{(k-1)y_1}{2(k-2)}$ from the flowchart (F{\footnotesize IG} \ref{fig:FC}). Therefore $\alpha=\frac{\beta}{(k-1)+\beta} > 1/(2k-3)$ holds. 

\begin{table}[ht]
	\centering
	\caption{$f(\boldsymbol{p})$ in the case $l=1$.}
	\scalebox{0.8}[0.8]{
		\begin{tabular}{cc|c|rcl}\label{k_ippan_l1}
			$p_{i^*}$ & $p_{i_-}$ & $1-p_{i^*}-2p_{i_-}$ & $f(\boldsymbol{p})$ \\[2pt] 
			\hline 
			$1-(k-1)\alpha$ & $\alpha$ & $(k-3)\alpha \geq 0$ & $(k-1)\alpha a_{i^*}+(k-3)\alpha a_{i_-}$&$\leq$&$ (k-1)\alpha y_1$ \\[2pt] 
			$\alpha$& $1-(k-1)\alpha$	& $(2k-3)\alpha-1 > 0$& $(1-\alpha)a_{i^*}+\{(2k-3)\alpha-1\}a_{i_-}$&$\leq$&$ (1-\alpha)y_2$\\[2pt] 
			$\alpha$ & $\alpha$ & $1-3\alpha>0$ & $(1-\alpha)a_{i^*}+(1-3\alpha)a_{i_-}$&$\leq$&$ (1-\alpha)y_2$ \\[2pt] 
			$1-(k-1)\alpha$ & None & None & $(k-1)\alpha a_{i^*} $&$\leq$&$ (k-1)\alpha y_1$ \\[2pt] 
			$\alpha$ & None & None & $(1-\alpha)a_{i^*}$&$\leq$&$ (1-\alpha)y_2$ \\[2pt] 
		\end{tabular} 
	}
\end{table}

In Table \ref{k_ippan_l1}, we have
\begin{equation}
	f(\boldsymbol{p})\leq 
	\begin{cases}
		(k-1)\alpha y_1 & (p_{i^*}=1-(k-1)\alpha)\\
		(1-\alpha)y_2 & (p_{i^*}=\alpha)
	\end{cases}.\nonumber
\end{equation}
By putting $\alpha=\frac{\beta}{(k-1)+\beta}$, 
\begin{equation}
	f(\boldsymbol{p}) \leq \frac{(k-1)\beta y_1}{(k-1)+\beta}. \nonumber
\end{equation}
We also have 
\begin{eqnarray}
	g(\boldsymbol{p})&=&\left( 1-\frac{(k-1)\beta}{(k-1)+\beta} \right)y_1+\frac{\beta}{(k-1)+\beta}(y_2+\cdots +y_k) \nonumber\\
	&\geq& \frac{\{(k-1)-(k-2)\beta\}y_1}{(k-1)+\beta}+\frac{(k-1)\beta y_k}{(k-1)+\beta}\nonumber\\
	&>& \frac{\{(k-1)-(k-2)\beta\}y_1}{(k-1)+\beta}+\frac{(k-1)\beta}{(k-1)+\beta}\frac{(\beta-\epsilon) y_1}{1+\epsilon}\nonumber
\end{eqnarray}
by $y_k>(y_2-\epsilon y_1)/(1+\epsilon)$ which follows from the flowchart. Hence 
\begin{eqnarray}
	\frac{g(\boldsymbol{p})}{f(\boldsymbol{p})}&\geq&\frac{(k-1)-(k-2)\beta+(k-1)\beta(\beta-\epsilon)/(1+\epsilon)}{(k-1)\beta}\nonumber\\
	&=&\frac{1}{\beta}-\frac{k-2}{k-1}+\frac{\beta}{1+\epsilon}-\frac{\epsilon}{1+\epsilon} \nonumber
\end{eqnarray}
holds. 

Let $h_1(\beta)=\frac{1}{\beta}-\frac{k-2}{k-1}+\frac{\beta}{1+\epsilon}-\frac{\epsilon}{1+\epsilon}$. From $\epsilon>0$ and  $\beta\leq 1$, which follows from the definitions, $h_1(\beta)$ is minimized when $\beta=1$. Then we have 
\begin{equation}
\min h_1 (\beta) = \frac{1}{k-1}+\frac{1-\epsilon}{1+\epsilon}.\nonumber%
\end{equation}

From \eqref{eps_limit_l0}, the definition of $\epsilon$, we have $\frac{g(\boldsymbol{p})}{f(\boldsymbol{p})}\geq 1+\epsilon$.

\subsubsection{Proof of \ref{thm_ippan} for the case \texorpdfstring{$l=2$}{l=2}}\label{subsecl2}
In this section, suppose $l=2$. We have $p_1=p_2=1/2, p_3= \cdots = p_k = 0$ from \eqref{prob_k_ippan}. It is convenient to have a list of the values of $1-p_{i^*}-2p_{i_-}$ and $f(\boldsymbol{p})$, for each case of $i^*$ and $i_-$ on Table \ref{k_ippan_l2}. For the inequalities in the column of $f(\boldsymbol{p})$, we used $y_i \geq a_i (i\in [k])$ and $a_i+a_j \geq 0$, which follows from the definition \eqref{cond_OS} and \eqref{cond_PMa}. 
\begin{table}[ht]
	\centering
	\caption{$f(\boldsymbol{p})$ in the case $l=2$}
	\scalebox{0.9}[0.9]{
		\begin{tabular}{cc|c|rcl}\label{k_ippan_l2}
			$p_{i^*}$ & $p_{i_-}$ & $1-p_{i^*}-2p_{i_-}$ & $f(\boldsymbol{p})$ \\[2pt] 
			\hline 
			$1/2$&$1/2$&$-1/2<0$&$(a_{i^*}-a_{i_-})/2$&$\leq$&$ (y_1+y_k)/2$\\[2pt]
			$1/2$&$0$&$1/2>0$&$(a_{i^*}+a_{i_-})/2$&$\leq$&$ y_1/2$\\[2pt]
			$0$&$1/2$&$0$&$a_{i^*}$&$\leq$&$ y_3$\\[2pt]
			$0$&$0$&$1>0$&$a_{i^*}+a_{i_-}$&$\leq$&$ y_3$\\[2pt]
			$1/2$&None &None &$a_{i^*}/2$&$\leq$&$ y_1/2$\\[2pt]
			$0$&None &None &$a_{i^*}$&$\leq$&$ y_3$
		\end{tabular} 
	}
\end{table}

In Table \ref{k_ippan_l2}, from $y_k>0$, we obtain 
\begin{equation}
	f(\boldsymbol{p})\leq
	\begin{cases}
		(y_1+y_k)/2 & (p_{i^*}=1/2)\\
		y_3 & (p_{i^*}=0)
	\end{cases}.\nonumber
\end{equation}
We also have $g(\boldsymbol{p})=(y_1+y_2)/2$, and then 
\begin{equation}
\frac{g(\boldsymbol{p})}{f(\boldsymbol{p})}\geq \min \left\{ \frac{y_1+y_2}{y_1+y_k}, \frac{y_1+y_2}{2y_3} \right\}\label{fg_l2}
\end{equation}
holds. From the flowchart, we obtain $y_k\leq \frac{y_2-\epsilon y_1}{1+\epsilon}$, $y_1+y_k\leq \frac{y_1+y_2}{1+\epsilon}$ and  $y_3\leq \frac{y_1+y_2}{2(1+\epsilon)}$. By these inequalities and \eqref{fg_l2}, we get $g(\boldsymbol{p})/f(\boldsymbol{p})\geq 1+\epsilon$．

\subsubsection{Proof of \ref{thm_ippan} for the case \texorpdfstring{$3\leq l\leq k-1$}{3<=l<=k-1}}\label{subsecl3}
Now suppose $3\leq l\leq k-1$.  We have $p_1=\cdots =p_l=1/l,\ p_{l+1}=\cdots =p_k=0$ from \eqref{prob_k_ippan}. From $l\geq 3$, we obtain $1-p_{i^*}-2p_{i_-}\geq 0$. Therefore, $(1-p_{i^*}-2p_{i_-})a_{i_-}\leq 0$ holds and we obtain $f(\boldsymbol{p})\leq (1-p_{i^*})a_{i^*}$. It is convenient to have a list of the values of $f(\boldsymbol{p})$, for each case of $i^*$ and $i_-$ on Table \ref{k_ippan_l3}. For the inequalities in the column of $f(\boldsymbol{p})$, we used $y_i \geq a_i (i\in [k])$, which follows from the definition \eqref{cond_OS}.

\begin{table}[ht]
	\centering
	\caption{$f(\boldsymbol{p})$ in the case $3\leq l\leq k-1$}
	\scalebox{0.9}[0.9]{
		\begin{tabular}{c|cll}\label{k_ippan_l3}
			$p_{i^*}$ & $f(\boldsymbol{p})$& &\\[2pt] 
			\hline 
			$1/l$&$f(\boldsymbol{p})$&$ \leq (1-1/l)a_{i^*}$&$\leq (1-1/l)y_1$\\[2pt]
			$0$&$f(\boldsymbol{p})$&$\leq a_{i^*}$&$\leq y_{l+1}$ 
		\end{tabular} 
	}
\end{table}

In table \ref{k_ippan_l3}, we have $f(\boldsymbol{p})\leq \max\{ (1-1/l)y_1, y_{l+1} \}$ and $g(\boldsymbol{p})=(\sum_{i=1}^l y_i)/l$. Hence 
\begin{equation}
\frac{g(\boldsymbol{p})}{f(\boldsymbol{p})} \geq  \min\left\{ \frac{\sum_{i=1}^l y_i}{(l-1)y_1}, \frac{\sum_{i=1}^l y_i}{l\cdot y_{l+1}} \right\} \nonumber
\end{equation}
holds. 

From the flowchart, we have $y_{l+1}\leq \frac{\sum_{i=1}^l y_i}{l(1+\epsilon)}$. Otherwise, we obtain $l+1$ as the output of the flowchart. So we have \begin{equation}
	\frac{\sum_{i=1}^l y_i}{l\cdot y_{l+1}} \geq 1+\epsilon.	\nonumber
\end{equation}

Now we need to show 
\begin{equation}
	\frac{\sum_{i=1}^l y_i}{(l-1)y_1}\geq 1+\epsilon. \nonumber
\end{equation}
From the flowchart, we also have $y_{l'+1} > \frac{\sum_{i=1}^{l'} y_i}{l'(1+\epsilon)}$ for any $l'$ which satisfies $2 \leq l'<l$. Otherwise, we obtain $l'<l$ as the output of the flowchart.
Hence we have 
\begin{eqnarray}
	\sum_{i=1}^{l} y_i&=&\sum_{i=1}^{l-1} y_i+y_l \nonumber \\
	&>&\sum_{i=1}^{l-1} y_i+\frac{\sum_{i=1}^{l-1} y_i}{(l-1)(1+\epsilon)}=\left(1+\frac{1}{(l-1)(1+\epsilon)}\right)\sum_{i=1}^{l-1} y_i\nonumber\\
	&>&\cdots >\left[\prod_{j=2}^{l-1}\left( 1+\frac{1}{j(1+\epsilon)} \right)\right](y_1+y_2). \nonumber
\end{eqnarray}
By $y_2\geq y_3>\frac{y_1+y_2}{2(1+\epsilon)}$, we also have $y_2 > y_1/(1+2\epsilon)$. From these inequalities, we obtain 
\begin{equation}
	\sum_{i=1}^{l} y_i > \left[\prod_{j=2}^{l-1}\left( 1+\frac{1}{j(1+\epsilon)}\right)\right]\cdot \frac{2(1+\epsilon)}{1+2\epsilon}y_1. \nonumber
\end{equation}
Hence 
\begin{equation}
	\frac{\sum_{i=1}^l y_i}{(l-1)y_1}>\frac{1}{l-1}\left[\prod_{j=2}^{l-1}\left( 1+\frac{1}{j(1+\epsilon)} \right)\right]\frac{2(1+\epsilon)}{1+2\epsilon} \nonumber 
\end{equation}
holds. 

Let $f(l)=\frac{1}{l-1}\prod_{j=2}^{l-1}\left( 1+\frac{1}{j(1+\epsilon)} \right)$. Then we have 
\begin{equation}
	f(l) = \frac{l-2}{l-1}\left( 1+\frac{1}{(l-1)(1+\epsilon)}\right)f(l-1)=\left\{ 1-\frac{1+(l-1)\epsilon}{(l-1)^2(1+\epsilon)} \right\}f(l-1). \nonumber
\end{equation}
From $\epsilon > 0$, we have $f(l)<f(l-1)$ and $f(k)<f(l)$ for $3\leq l\leq k$. 
Hence we have $\frac{\sum_{i=1}^l y_i}{(l-1)y_1} \geq 1+\epsilon$ for $3\leq l\leq k$ from the definition of $\epsilon$ \eqref{eps_limit_lk}.

\subsubsection{Proof of \ref{thm_ippan} for the case \texorpdfstring{$l=k$}{l=k}}\label{subseclk}
In this section, suppose $l=k$. We have $p_1=\cdots =p_k=1/k$ from \eqref{prob_k_ippan}. From $l = k\geq 3$, we obtain $1-p_{i^*}-2p_{i_-}\geq 0$. Therefore,  $(1-p_{i^*}-2p_{i_-})a_{i_-}\leq 0$ holds and we obtain $f(\boldsymbol{p})\leq (1-p_{i^*})a_{i^*}$. By putting $p_i=1/k$ and $y_1 \geq y_i \geq a_i$ for $i\in [k]$, which follows from the definition \eqref{cond_OS}, we have $f(\boldsymbol{p}) \leq \frac{k-1}{k}y_1$. We also have $g(\boldsymbol{p})=(\sum_{i=1}^k y_i)/k$. Hence we obtain 
\begin{equation}
	\frac{g(\boldsymbol{p})}{f(\boldsymbol{p})} \geq \frac{\sum_{i=1}^k y_i}{(k-1)y_1}. \nonumber
\end{equation}
From the flowchart, we have $y_{l'+1} > \frac{\sum_{i=1}^{l'} y_i}{l'(1+\epsilon)}$ for any $2 \leq l'< k$. Otherwise, we obtain $l'< k$ as the output of the flowchart. In the same way as the discussion in the previous section, we have 
\begin{equation}
	\frac{\sum_{i=1}^k y_i}{(k-1)y_1}>\frac{1}{k-1}\left[\prod_{j=2}^{k-1}\left( 1+\frac{1}{j(1+\epsilon)} \right)\right]\frac{2(1+\epsilon)}{1+2\epsilon}. \nonumber 
\end{equation}
From \eqref{eps_limit_lk}, the definition of $\epsilon$, we have $\frac{\sum_{i=1}^k y_i}{(k-1)y_1} \geq 1+\epsilon$.



\section{Conclusion}
We proposed two randomized algorithms for nonmonotone $k$-submodular maximization with $k \geq 3$. Our general algorithm achieves a better approximation ratio, $\frac{k^2+1}{2k^2+1}$, for $k\geq 3$ we can even get a better algorithm.

It might be possible to analyze the case $k\geq 4$ similarly to the case $k=3$. However generalization is difficult. It is interesting that there is a systematic way to improve an approximation ratio for $k\geq 4$ or not.


\section*{Acknowledgments}
The author would like to thank Kunihiko Sadakane for valuable comments and discussion on this work.
The author would also like to thank Shin-ichi Tanigawa for helpful comments for this paper. 
Preliminary version of this paper is in master thesis by the author (in Japanese).

\bibliographystyle{siamplain}
\bibliography{suririnko_yoko_ksub_ref.bib}

\end{document}